%
\documentclass[letterpaper,UKenglish]{lipics}
 
\usepackage{microtype}
\usepackage{amssymb,amsmath,latexsym,hyperref}

\def\ML{\mathop{\rm ML}\nolimits}
\def\MA{\mathop{\rm MA}\nolimits}


\bibliographystyle{plain}

\title{Single machine scheduling with job-dependent machine deterioration%
\footnote{This work was partially supported by AITF and NSERC Canada, and China Scholarship Council.}}
\titlerunning{Single machine scheduling with job-dependent machine deterioration} 

\author[1,2]{Wenchang Luo}
\author[2]{Yao Xu}
\author[3]{Weitian Tong}
\author[2]{Guohui Lin}
\affil[1]{Faculty of Science, Ningbo University.
  Ningbo, Zhejiang 315211, China.}
\affil[2]{Department of Computing Science, University of Alberta.
  Edmonton, Alberta T6G 2E8, Canada.
  \texttt{\{wenchang,xu2,guohui\}@ualberta.ca}}
\affil[3]{Department of Computer Sciences, Georgia Southern University.
  Statesboro, Georgia 30460, USA.
  \texttt{wtong@georgiasouthern.edu}}
\authorrunning{Luo {\it et al.} version/\today} 

\Copyright{Wenchang Luo, Yao Xu, Weitian Tong, and Guohui Lin}

\subjclass{Dummy classification -- please refer to \url{http://www.acm.org/about/class/ccs98-html}}
\keywords{Scheduling, machine deterioration, maintenance, NP-hard, approximation algorithm}


\begin{document}

\maketitle

\begin{abstract}
We consider the single machine scheduling problem with job-dependent machine deterioration.
In the problem, we are given a single machine with an initial non-negative maintenance level, 
and a set of jobs each with a non-preemptive processing time and a machine deterioration.
Such a machine deterioration quantifies the decrement in the machine maintenance level after processing the job.
To avoid machine breakdown, one should guarantee a non-negative maintenance level at any time point;
and whenever necessary, a maintenance activity must be allocated for restoring the machine maintenance level. 
The goal of the problem is to schedule the jobs and the maintenance activities such that the total completion time of jobs is minimized.  
There are two variants of maintenance activities: 
in the partial maintenance case each activity can be allocated to increase the machine maintenance level to any level not exceeding the maximum;
in the full maintenance case every activity must be allocated to increase the machine maintenance level to the maximum.
In a recent work, the problem in the full maintenance case has been proven NP-hard;
several special cases of the problem in the partial maintenance case were shown solvable in polynomial time,
but the complexity of the general problem is left open.
In this paper we first prove that the problem in the partial maintenance case is NP-hard, thus settling the open problem;
we then design a $2$-approximation algorithm.
\end{abstract}

\section{Introduction}\label{sec1}
In many scheduling problems, processing a job on a machine causes the machine to deteriorate to some extent,
and consequently maintenance activities need to be executed in order to restore the machine capacity.
Scheduling problems with maintenance activities have been extensively investigated since the work of Lee and Liman~\cite{LL92}.

A maintenance activity is normally described by two parameters, the starting time and the duration.
If these two parameters are given beforehand, a maintenance activity is referred to as {\em fixed};
otherwise it is called {\em flexible}.
Various scheduling models with fixed maintenance activities, on different machine environments and job characteristics,
have been comprehensively surveyed by Schmidt~\cite{Sch00}, Lee~\cite{Lee04}, and Ma {\it et al.}~\cite{MCZ10}.

A number of researchers initiated the work with flexible maintenance activities.
Qi {\it et al.}~\cite{QCF99} considered a single machine scheduling problem to simultaneously schedule jobs and maintenance activities,
with the objective to minimize the total completion time of jobs. 
They showed that the problem is NP-hard in the strong sense and proposed heuristics and a branch-and-bound exact algorithm.
(Qi \cite{Qi07} later analyzed the worst-case performance ratio for one of the heuristics, {\em the shortest processing time first} or SPT.)
Lee and Chen~\cite{LC00} studied the multiple parallel machines scheduling problem where each machine must be maintained exactly once,
with the objective to minimize the total weighted completion time of jobs.
They proved the NP-hardness for some special cases and proposed a branch-and-bound exact algorithm based on column generation;
the NP-hardness for the general problem is implied.
Kubzin and Strusevich~\cite{KS06} considered a two-machine open shop and a two-machine flow shop scheduling problems 
in which each machine has to be maintained exactly once and the duration of each maintenance depends on its starting time.
The objective is to minimize the maximum completion time of all jobs and all maintenance activities.
Among others, the authors showed that the open shop problem is polynomial time solvable for quite general functions 
defining the duration of maintenance in its starting time;
they also proved that the flow shop problem is binary NP-hard and presented a fully polynomial time approximation scheme (FPTAS)~\cite{KS06}.

Returning to a single machine scheduling problem,
Chen~\cite{Che08} studied the periodic maintenance activities of a constant duration not exceeding the available period,
with the objective to minimize the maximum completion time of jobs (that is, the {\em makespan}).
The author presented two mixed integer programs and heuristics and conducted computational experiments to examine their performance.
Mosheiov and Sarig~\cite{MS09} considered the problem where the machine needs to be maintained prior to a given deadline,
with the objective to minimize the total weighted completion time of jobs. 
They showed the binary NP-hardness and presented a pseudo-polynomial time dynamic programming algorithm and an efficient heuristic.
Luo {\it et al.}~\cite{LCZ10} investigated a similar variant (to \cite{MS09}) in which the jobs are weighted and
the duration of the maintenance is a nondecreasing function of the starting time (which must be prior to a given deadline).
Their objective is to minimize the total weighted completion time of jobs;
the authors showed the weakly NP-hardness,
and for the special case of concave duration function they proposed a $(1 + \sqrt{2}/2 + \epsilon)$-approximation algorithm.
Yang and Yang~\cite{YY10} considered a position-dependent aging effect described by a power function 
under maintenance activities and variable maintenance duration considerations simultaneously;
they examined two models with the objective to minimize the makespan, and for each of them they presented a polynomial time algorithm.

Scheduling on two identical parallel machines with periodic maintenance activities was examined by Sun and Li~\cite{SL10},
where the authors presented approximation algorithms with constant performance ratios for minimizing the makespan or minimizing the total completion time of jobs.
Xu {\it et al.}~\cite{XYL10} considered the case where the length of time between two consecutive maintenances is bounded;
they presented an approximation algorithm for the multiple parallel machines scheduling problem to minimize the completion time of the last maintenance,
and for the single machine scheduling problem to minimize the makespan, respectively.

\subsection{Problem definition}
Considering the machine deterioration in the real world, in a recent work by Bock {\it et al.}~\cite{BBH12},
a new scheduling model subject to {\em job-dependent machine deterioration} is introduced.
In this model, the single machine must have a non-negative {\em maintenance level} ($\ML$) at any time point, specifying its current maintenance state.
(A negative maintenance level indicates the machine breakdown, which is prohibited.) 
We are given a set of jobs ${\cal J} = \{J_i, i = 1, 2, \ldots, n\}$,
where each job $J_i = (p_i, \delta_i)$ is specified by its non-preemptive {\em processing time} $p_i$ and {\em machine deterioration} $\delta_i$. 
The machine deterioration $\delta_i$ quantifies the decrement in the machine maintenance level after processing the job $J_i$.
(That is, if before processing the job $J_i$ the maintenance level is $\ML$, then afterwards the maintenance level reduces to $\ML - \delta_i$ ---
suggesting that $\ML$ has to be at least $\delta_i$ in order for the machine to process the job $J_i$.)

Clearly, to process all the jobs, {\em maintenance activities} ($\MA$s) need to be allocated inside a schedule to restore the maintenance level,
preventing machine breakdown.
Given that the machine can have a maximum maintenance level of $\ML^{\max}$, and assuming a unit maintenance speed,
an $\MA$ of a duration $D$ would increase the maintenance level by $\min\{D, \ML^{\max} - \ML\}$, where $\ML$ is the maintenance level before the $\MA$.

With an initial machine maintenance level $\ML_0$, $0 \le \ML_0 \le \ML^{\max}$,
the goal of the problem is to schedule the jobs and necessary $\MA$s such that all jobs can be processed without machine breakdown,
and that the total completion time of jobs is minimized.

There are two variants of the problem depending on whether or not one has the freedom to choose the duration of an $\MA$:
in the {\em partial maintenance} case, the duration of each $\MA$ can be anywhere in between $0$ and $(\ML^{\max} - \ML)$,
where $\ML$ is the maintenance level before the $\MA$;
in the {\em full maintenance} case, however, the duration of every $\MA$ must be exactly $(\ML^{\max} - \ML)$,
consequently increasing the maintenance level to the maximum value $\ML^{\max}$.
Let $C_i$ denote the completion time of the job $J_i$, for $i = 1, 2, \ldots, n$.
In the three field notation, the two problems discussed in this paper are denoted as $(1 | p\MA | \sum_i C_i)$ and $(1 | f\MA | \sum_i C_i)$,
respectively, where $p\MA$ and $f\MA$ refer to the partial and the full maintenance, respectively.

\subsection{Prior work and our contribution}
Bock {\it et al.}~\cite{BBH12} proved that $(1 | f\MA | \sum_i C_i)$ is NP-hard, even when $p_i = p$ for all $i$ or when $p_i = \delta_i$ for all $i$,
both by a reduction from the {\sc Partition} problem~\cite{GJ79};
while all the jobs have the same deterioration, i.e. $\delta_i = \delta$ for all $i$, the problem can be solved in $O(n \log n)$ time.
For the partial maintenance case, Bock {\it et al.}~\cite{BBH12} showed that the SPT rule gives an optimal schedule for
$(1 | p\MA | \sum_i C_i)$ when $p_i < p_j$ implies $p_i + \delta_i \le p_j + \delta_j$ for each pair of $i$ and $j$
(which includes the special cases where $p_i = p$ for all $i$, or $\delta_i = \delta$ for all $i$, or $p_i = \delta_i$ for all $i$).
The complexity of the general problem $(1 | p\MA | \sum_i C_i)$ was left as an open problem.
Also, to the best of our knowledge, no approximation algorithms have been designed for either problem.

Our main contribution in this paper is to settle the NP-hardness of the general problem $(1 | p\MA | \sum_i C_i)$.
Such an NP-hardness might appear a bit surprising at the first glance since one has so much freedom in choosing the starting time and the duration of each $\MA$.
Our reduction is from the {\sc Partition} problem too, using a kind of job swapping argument.
This reduction is presented in Section 3, following some preliminary properties we observe for the problem in Section 2.
In Section 4, we propose a $2$-approximation algorithm for $(1 | p\MA | \sum_i C_i)$.
We conclude the paper in Section 5 with some discussion on the (in-)approximability.

Lastly, we would like to point out that when the objective is to minimize the makespan $C_{\max}$, i.e. the maximum completion time of jobs,
$(1 | p\MA | C_{\max})$ can be trivially solved in $O(n)$ time and
$(1 | f\MA | C_{\max})$ is NP-hard but admits an $O\left(n^2 (\ML^{\max})^2 \log \left(\sum_{i=1}^n (p_i + \delta_i)\right)\right)$ time algorithm
based on dynamic programming (and thus admits an FPTAS)~\cite{BBH12}.

\section{Preliminaries}
Given a feasible schedule $\pi$ to the problem $(1 | p\MA | \sum_i C_i)$,
which specifies the start processing time for each job and the starting time and the duration of each $\MA$,
we abuse slightly $\pi$ to also denote the permutation of the job indices $(1, 2, \ldots, n)$ in which the jobs are processed in order:
$\pi = (\pi_1, \pi_2, \ldots, \pi_n)$.
The following lemma is proved in \cite{BBH12}.

\begin{lemma}
\label{lemma01}
{\rm \cite{BBH12}}
There is an optimal schedule $\pi$ to $(1 | p\MA | \sum_i C_i)$ such that
the total maintenance duration before processing the job $J_{\pi_i}$ equals $\max\left\{0, \sum_{j = 1}^i \delta_{\pi_j} - \ML_0\right\}$,
for each $i = 1, 2, \ldots, n$.
\end{lemma}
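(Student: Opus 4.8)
The plan is to take any optimal schedule, keep its job permutation $\pi=(\pi_1,\ldots,\pi_n)$, and replace all of its maintenance activities by a single ``just-in-time'' maintenance inserted in front of each job. Writing $S_i=\sum_{j=1}^i\delta_{\pi_j}$ and $M_i=\max\{0,S_i-\ML_0\}$, I would define the schedule $\pi'$ that runs the jobs in the order $\pi$ with no idle time, placing a maintenance of duration $M_1$ before $J_{\pi_1}$ and, for $i\ge 2$, a maintenance of duration $M_i-M_{i-1}$ between $J_{\pi_{i-1}}$ and $J_{\pi_i}$. By construction the total maintenance duration before $J_{\pi_i}$ equals $M_i$, which is exactly the claimed quantity, so it remains to check that $\pi'$ is feasible and that it is no worse than the schedule we started with.

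For feasibility I would first note that $i\mapsto M_i$ is nondecreasing, so all inserted durations are nonnegative, and an easy case analysis (splitting on the position of $\ML_0$ relative to $S_{i-1}$ and $S_i$) gives $M_i-M_{i-1}\le\delta_{\pi_i}$. Since $\delta_{\pi_i}\le\ML^{\max}$ must hold for any feasible instance, an induction along the schedule shows that no inserted maintenance is ever capped --- each raises the level by exactly its duration --- and that the maintenance level just before $J_{\pi_i}$ equals $\ML_0+M_i-S_{i-1}\ge\delta_{\pi_i}$ and never exceeds $\ML^{\max}$, while it drops to $\ML_0+M_i-S_i\ge 0$ afterwards. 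Hence $\pi'$ is a feasible schedule.

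To compare objective values I would use that the single machine is occupied by exactly one activity at a time and that jobs are non-preemptive, so no maintenance can sit inside a job; therefore in any feasible schedule $\sigma$ with job order $\pi$ we have $C_{\pi_i}(\sigma)=\sum_{j=1}^i p_{\pi_j}+M_i'$, where $M_i'$ is the total maintenance duration occurring before $J_{\pi_i}$ starts. Because the effective increase of a maintenance never exceeds its duration, the level just before $J_{\pi_i}$ is at most $\ML_0+M_i'-S_{i-1}$; feasibility of $\sigma$ forces this to be at least $\delta_{\pi_i}$, giving $M_i'\ge S_i-\ML_0$, and trivially $M_i'\ge 0$, so $M_i'\ge M_i$. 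Summing $C_{\pi_i}(\sigma)=\sum_{j=1}^i p_{\pi_j}+M_i'\ge\sum_{j=1}^i p_{\pi_j}+M_i=C_{\pi_i}(\pi')$ over $i$ shows $\sum_i C_i(\sigma)\ge\sum_i C_i(\pi')$. Applying this with $\sigma$ the original optimal schedule shows $\pi'$ is also optimal, and $\pi'$ has the required maintenance profile.

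The routine part is the bookkeeping in the two inequalities; the one place to be careful is the feasibility induction for $\pi'$ --- specifically verifying that the ``just-in-time'' maintenance never overshoots $\ML^{\max}$, so that every maintenance is fully effective --- since this is what turns the constructed total maintenance durations into an attainable profile rather than merely a lower bound.
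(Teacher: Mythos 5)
Your proof is correct: the ``just-in-time'' construction, the check that no inserted maintenance overshoots $\ML^{\max}$ (using $\delta_{\pi_i}\le\ML^{\max}$), and the lower bound $M_i'\ge\max\{0,S_i-\ML_0\}$ for any feasible schedule with the same job order together give exactly the claimed optimal profile. The paper itself only cites \cite{BBH12} for this lemma and summarizes the idea as pushing each $\MA$ as late and as short as possible, which is precisely the argument you formalize, so no further comparison is needed.
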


Lemma~\ref{lemma01} essentially states that each $\MA$ should be pushed later in the schedule as much as possible until absolutely necessary,
and its duration should be minimized just for processing the succeeding job.
In the sequel, we limit our discussion on the feasible schedules satisfying these two properties.
We define the {\em separation job} in such a schedule $\pi$ as the first job that requires an $\MA$ (of a positive duration).

\begin{lemma}
\label{lemma02}
Suppose $J_{\pi_k}$ is the separation job in an optimal schedule $\pi$ to $(1 | p\MA | \sum_i C_i)$.
Then,
\begin{itemize}
\parskip=0pt
\item
	the jobs before the separation job $J_{\pi_k}$ are scheduled in the SPT order;
\item
	the jobs after the separation job $J_{\pi_k}$ are scheduled in the {\em shortest sum-of-processing-time-and-deterioration first} (SSF) order;
\item
	the jobs adjacent to the separation job $J_{\pi_k}$ satisfy
	\[
	 p_{\pi_{k-1}} + \min\{\delta_{\pi_{k-1}}, \delta_{\pi_k} - \delta\} \le
	 p_{\pi_k} + (\delta_{\pi_k} - \delta) \le
	 p_{\pi_{k+1}} + \max\{0, \delta_{\pi_{k+1}} - \delta\},
	\]
	where $\delta = \ML_0 - \sum_{i=1}^{k-1} \delta_{\pi_i}$ is the remaining maintenance level before the first $\MA$.
\end{itemize}
\end{lemma}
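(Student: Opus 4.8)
The plan is to establish each of the three claims by a local exchange argument, relying on the structural normalization already granted by Lemma~\ref{lemma01}: we may assume the total maintenance inserted before $J_{\pi_i}$ is exactly $\max\{0,\sum_{j\le i}\delta_{\pi_j}-\ML_0\}$, so in particular no $\MA$ occurs among the first $k-1$ jobs and every $\MA$ after position $k-1$ has duration exactly equal to the deterioration of the job it precedes (with the first such $\MA$, before $J_{\pi_k}$, having duration $\delta_{\pi_k}-\delta$). The first bullet is the easiest: since no maintenance is interleaved among $J_{\pi_1},\ldots,J_{\pi_{k-1}}$, these jobs are processed back-to-back starting at time $0$, so their contribution to $\sum_i C_i$ is precisely the classical single-machine total-completion-time expression, which is minimized by SPT (a standard adjacent-swap: swapping two out-of-SPT-order neighbors strictly decreases the objective and changes nothing else in the schedule, since the cumulative deterioration at position $k$ and hence the rest of the schedule is unaffected).

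For the second bullet I would argue similarly but with the observation that, once we are past the separation job, every job $J_{\pi_i}$ with $i\ge k$ is immediately preceded by an $\MA$ of duration $\delta_{\pi_i}$ (or $\delta_{\pi_k}-\delta$ for $i=k$, but the swaps here only involve positions $>k$), so the effective ``length'' that each such job occupies on the timeline is $p_{\pi_i}+\delta_{\pi_i}$. Swapping two adjacent jobs at positions $i,i+1\ge k$ that violate SSF order leaves the start time of position $i$ unchanged, leaves all completion times at positions $\le i-1$ and $\ge i+2$ unchanged (the total deterioration consumed through position $i+1$ is the same, so the later maintenance pattern is untouched), and strictly improves the two affected completion times — here I must be slightly careful to check that the swap remains feasible, i.e. that the maintenance level never goes negative, which holds because the prefix sums of deteriorations only get reordered within a block that is already ``fully maintained'' job-by-job.

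The third bullet is where the real work lies, and I expect it to be the main obstacle. The right-hand inequality $p_{\pi_k}+(\delta_{\pi_k}-\delta)\le p_{\pi_{k+1}}+\max\{0,\delta_{\pi_{k+1}}-\delta\}$ and the left-hand inequality $p_{\pi_{k-1}}+\min\{\delta_{\pi_{k-1}},\delta_{\pi_k}-\delta\}\le p_{\pi_k}+(\delta_{\pi_k}-\delta)$ each come from swapping the separation job with a neighbor, but unlike the swaps above these swaps move the separation job itself, so they change which job triggers the first $\MA$, they change $\delta$, and they change the duration of the first $\MA$; consequently several completion times shift simultaneously and one has to compare two genuinely different schedules rather than just two adjacent terms. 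The strategy is: assume the inequality fails, perform the swap of $J_{\pi_k}$ with $J_{\pi_{k-1}}$ (resp. $J_{\pi_{k+1}}$), re-apply Lemma~\ref{lemma01} to renormalize the maintenance in the new schedule, and show the total completion time strictly decreases — the $\min$ and $\max$ terms in the statement are exactly bookkeeping for the two sub-cases of whether, after the swap, the moved job still needs an $\MA$ before it (which depends on the sign of $\delta_{\pi_k}-\delta-\delta_{\pi_{k-1}}$ on the left, and on whether $\delta_{\pi_{k+1}}$ alone exceeds the post-swap remaining level $\delta$ on the right). I would isolate the net change $\Delta(\sum_i C_i)$ as a function of the swapped pair's parameters, observe that all positions outside $\{k-1,k\}$ (resp. $\{k,k+1\}$) contribute zero to $\Delta$ because the cumulative processing-plus-maintenance time at those positions is preserved, and then the sign of $\Delta$ reduces exactly to the claimed inequality; the delicate point to verify throughout is feasibility of the swapped schedule, which I would handle by noting that the reordering only affects prefix sums of deterioration within the first $k$ positions and that Lemma~\ref{lemma01}'s normalization automatically keeps $\ML\ge 0$.
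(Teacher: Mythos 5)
Your proposal is correct and follows essentially the same route as the paper: adjacent-exchange arguments for the SPT prefix and SSF suffix (using the Lemma~\ref{lemma01} normalization so that each post-separation job effectively occupies $p_{\pi_i}+\delta_{\pi_i}$ on the timeline), and for the third bullet a swap of the separation job with each neighbor, with the $\min$/$\max$ terms arising from exactly the two sub-cases you identify (whether $\delta_{\pi_{k-1}}\ge\delta_{\pi_k}-\delta$ on the left, and whether $\delta_{\pi_{k+1}}>\delta$ on the right). The paper's proof is just a more compressed version of this same case analysis.
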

\begin{proof}
Starting with an optimal schedule satisfying the properties stated in Lemma~\ref{lemma01},
one may apply a simple job swapping procedure if the job order is violated
either in the prefix or in the suffix of job order separated by the separation job $J_{\pi_k}$.
This procedure would decrease the value of the objective, contradicting to the optimality.
That is, we have (see Figure~\ref{fig01} for an illustration)
\begin{eqnarray}
\label{eq1}
p_{\pi_1} \le p_{\pi_2} \le \ldots \le p_{\pi_{k-1}}, \ \mbox{ and}\\
\label{eq2}
p_{\pi_{k+1}} + \delta_{\pi_{k+1}} \le p_{\pi_{k+2}} + \delta_{\pi_{k+2}} \le \ldots \le p_{\pi_n} + \delta_{\pi_n}.
\end{eqnarray}

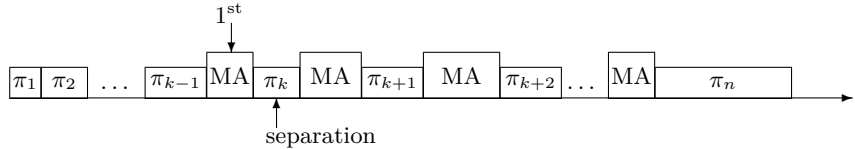
\begin{figure}[htb]
\begin{center}
\unitlength=0.4mm
\begin{picture}(300, 50)
\put(0, 10){\vector(1, 0){280}}
\put(0, 10){\framebox(10, 10){\small $\pi_1$}}
\put(10.5, 10){\framebox(15, 10){\small $\pi_2$}}
\put(30, 13){\small $\ldots$}
\put(45, 10){\framebox(20, 10){\small $\pi_{k-1}$}}
\put(65.5, 10){\framebox(15, 15){\small $\MA$}}
\put(81, 10){\framebox(15, 10){\small $\pi_k$}}
\put(88.5, 0){\vector(0, 1){10}}
\put(85, -5){\small separation}
\put(73.5, 35){\vector(0, -1){10}}
\put(68, 35){\small 1$^{\rm st}$}
\put(96.5, 10){\framebox(20, 15){\small $\MA$}}
\put(117, 10){\framebox(20, 10){\small $\pi_{k+1}$}}
\put(137.5, 10){\framebox(25, 15){\small $\MA$}}
\put(163, 10){\framebox(20, 10){\small $\pi_{k+2}$}}
\put(185, 13){\small $\ldots$}
\put(199, 10){\framebox(15, 15){\small $\MA$}}
\put(214.5, 10){\framebox(45, 10){\small $\pi_{n}$}}
\end{picture}
\end{center}
\caption{An illustration of the optimal schedule $\pi$ stated in Lemma~\ref{lemma02}, where the separation job is $J_{\pi_k}$;
	the width of a framebox does not necessarily equal the processing time of a job or the duration of an $\MA$.\label{fig01}}
\end{figure}

Let $\delta =  \ML_0 - \sum_{i=1}^{k-1} \delta_{\pi_i}$ denote the remaining maintenance level before the first $\MA$.
Because $\delta < \delta_{\pi_k}$, an (the first) $\MA$ of duration $\delta_{\pi_k} - \delta$ needs to be performed for processing the separation job $J_{\pi_k}$.
From the optimality of $\pi$, swapping the two jobs $J_{\pi_k}$ and $J_{\pi_{k+1}}$ should not decrease the objective, that is,
\[
\left\{
\begin{array}{ll}
p_{\pi_k} + (\delta_{\pi_k} - \delta) \le p_{\pi_{k+1}} + (\delta_{\pi_{k+1}} - \delta), &\mbox{if } \delta_{\pi_{k+1}} > \delta;\\
p_{\pi_k} + (\delta_{\pi_k} - \delta) \le p_{\pi_{k+1}}, &\mbox{otherwise}.
\end{array}\right.
\]
Similarly, swapping the two jobs $J_{\pi_{k-1}}$ and $J_{\pi_k}$ should not decrease the objective, that is,
\[
\left\{
\begin{array}{ll}
p_{\pi_{k-1}} \le p_{\pi_k}, &\mbox{if } \delta_{\pi_{k-1}} \ge \delta_{\pi_k} - \delta;\\
p_{\pi_{k-1}} + \delta_{\pi_{k-1}} \le p_{\pi_k} + (\delta_{\pi_k} - \delta), &\mbox{otherwise}.
\end{array}\right.
\]
These together give
\begin{equation}
\label{eq3}
p_{\pi_{k-1}} + \min\{\delta_{\pi_{k-1}}, \delta_{\pi_k} - \delta\} \le
 p_{\pi_k} + (\delta_{\pi_k} - \delta) \le
 p_{\pi_{k+1}} + \max\{0, \delta_{\pi_{k+1}} - \delta\}.
\end{equation}
This proves the lemma.
\end{proof}

From Lemma~\ref{lemma02}, one sees that the separation job in an optimal schedule is unique,
in the sense that it cannot always be ``appended'' to either the prefix SPT order or the suffix SSF order.
This is reflected in our NP-completeness reduction in Section 3, where we force a certain scenario to happen.

\section{NP-hardness of the problem $(1 | p\MA | \sum_i C_i)$}\label{sec2}
Our reduction is from the classic NP-complete problem {\sc Partition}~\cite{GJ79}, formally defined as follows:
\begin{quote}
{\sc Partition:}

{\sc Instance:} A set $X$ of $n$ positive integers $X = \{x_1, x_2, \ldots, x_n\}$, with $\sum_{i=1}^n x_i = 2B$.

{\sc Query:} Is there a subset $X_1 \subset X$ such that $\sum_{x \in X_1} x = \sum_{x \in X - X_1} x = B$?
\end{quote}

We abuse $X$ to denote the instance of {\sc Partition} with the set $X = \{x_1, x_2, \ldots, x_n\}$ and $\sum_{i=1}^n x_i = 2B$.
The corresponding instance $I$ of the problem $(1 | p\MA | \sum_i C_i)$ is constructed in polynomial time, as follows:
\begin{center}
\begin{tabular*}{\textwidth}{rl}
Number of jobs:				& $2n + 3$;\\
Job processing time:		& $p_{n+1+i} = p_i = \sum_{j=1}^i x_j, \mbox{ for } i = 0, 1, 2, \ldots, n$,\\
							& $p_{2n+2} = M - 2B$;\\
Machine deterioration:		& $\delta_{n+1+i} = \delta_i = M - 2p_i, \mbox{ for } i = 0, 1, 2, \ldots, n$,\\
							& $\delta_{2n+2} = 0$;\\
Initial maintenance level:	& $\ML_0 = \sum_{i=0}^n \delta_i - 2B$;\\
Maximum maintenance level:	& $\ML^{\max} = \sum_{i=0}^n \delta_i$;\\
Objective threshold:		& $Q = Q_0 + B$,
\end{tabular*}
\end{center}
(note that $p_{n+1} = p_0 = \sum_{j=1}^0 x_j = 0$ due to the empty range for $j$)
where $M$ is a big integer:
\begin{equation}
\label{eq4}
M > (4n + 8) B,
\end{equation}
and $Q_0$ is the total completion time of jobs for an {\em initial infeasible schedule} $\pi^0$ (see Figure~\ref{fig02}):
\begin{equation}
\label{eq5}
Q_0 = \sum_{j=0}^n (n-j+1) p_j + (n+2)\left(\sum_{j=0}^n p_j + 2B + p_{2n+2}\right) + \sum_{j=0}^n (j+1)(p_{n+1+j} + \delta_{n+1+j}).
\end{equation}

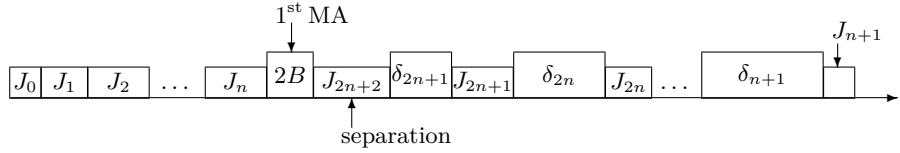
\begin{figure}[htb]
\begin{center}
\unitlength=0.4mm
\begin{picture}(300, 50)
\put(0, 10){\vector(1, 0){295}}
\put(0, 10){\framebox(10, 10){\small $J_0$}}
\put(10.5, 10){\framebox(15, 10){\small $J_1$}}
\put(26, 10){\framebox(20, 10){\small $J_2$}}
\put(50, 13){\small $\ldots$}
\put(65, 10){\framebox(20, 10){\small $J_n$}}
\put(85.5, 10){\framebox(15, 15){\small $2B$}}
\put(101, 10){\framebox(25, 10){\small $J_{2n+2}$}}
\put(113.5, 0){\vector(0, 1){10}}
\put(110, -5){\small separation}
\put(93.5, 35){\vector(0, -1){10}}
\put(88, 35){\small 1$^{\rm st} \MA$}
\put(126.5, 10){\framebox(20, 15){\small $\delta_{2n+1}$}}
\put(147, 10){\framebox(20, 10){\small $J_{2n+1}$}}
\put(167.5, 10){\framebox(30, 15){\small $\delta_{2n}$}}
\put(198, 10){\framebox(15, 10){\small $J_{2n}$}}
\put(216, 13){\small $\ldots$}
\put(230, 10){\framebox(40, 15){\small $\delta_{n+1}$}}
\put(270.5, 10){\framebox(10, 10){\small ~}}
\put(275, 28){\vector(0, -1){8}}
\put(273, 30){\small $J_{n+1}$}
\end{picture}
\end{center}
\caption{The initial infeasible schedule $\pi^0$ for the instance $I$ with the separation job $J_{2n+2}$;
	$\pi^0$ satisfies all properties stated in Lemma~\ref{lemma02}.
	All $\MA$s are indicated by their respective durations (for the first $\MA$, its duration is $\delta_{2n+2} - \delta = 2B$).\label{fig02}}
\end{figure}

The job order in this initial schedule $\pi^0$ is $(J_0, J_1, \ldots, J_n, J_{2n+2}, J_{2n+1}, J_{2n}, \ldots, J_{n+1})$,
and the first $\MA$ precedes the job $J_{2n+2}$, which is regarded as the separation job (see Figure~\ref{fig02}).
Before the separation job $J_{2n+2}$, the machine maintenance level is allowed to go into negative, but has to be restored to zero just for processing $J_{2n+2}$;
afterwards, machine breakdown is no longer tolerated.
From $\ML_0 = \sum_{i=0}^n \delta_i - 2B$, we know that $\pi^0$ is {\em infeasible} due to machine breakdown before the first $\MA$;
we will convert it to a feasible schedule later.
The {\sc Query} of the decision version of the problem $(1 | p\MA | \sum_i C_i)$ is
whether or not there exists a feasible schedule $\pi$ such that the total completion time of jobs is no more than $Q = Q_0 + B$.

Despite the infeasibility, the initial schedule $\pi^0$ has all the properties stated in Lemma~\ref{lemma02},
with the separation job $J_{2n+2}$ at the {\em center position}.
The first $(n+1)$ jobs are in the SPT order and the last $(n+1)$ jobs are in the SSF order;
since $\delta = -2B$, $p_n = p_{2n+1} = 2B$, $\delta_n = \delta_{2n+1} = M - 4B$, $p_{2n+2} = M - 2B$, $\delta_{2n+2} = 0$,
Eq.~(\ref{eq3}) is also satisfied due to the big $M$ in Eq.~(\ref{eq4}):
\[
p_n + \min\{\delta_n, \delta_{2n+2} - \delta\} < p_{2n+2} + (\delta_{2n+2} - \delta) = p_{2n+1} + \max\{0, \delta_{2n+1} - \delta\}.
\]
In the rest of the section, we will show that there is a subset $X_1 \subset X$ of sum exactly $B$ if and only if
the initial schedule $\pi^0$ can be converted into a feasible schedule $\pi$ with the total completion time of jobs no more than $Q = Q_0 + B$,
through a {\em repeated job swapping} procedure.

Notice that the two jobs $J_i$ and $J_{n+1+i}$ are identical, for $i = 0, 1, \ldots, n$.
In any schedule with the job $J_{2n+2}$ at the center position, if exactly one of $J_i$ and $J_{n+1+i}$ is scheduled before $J_{2n+2}$,
then we always say $J_i$ is scheduled before $J_{2n+2}$ while $J_{n+1+i}$ is scheduled after $J_{2n+2}$.
Also, when the two jobs $J_i$ and $J_{n+1 + i}$ are both scheduled before $J_{2n+2}$, then $J_{n+1 + i}$ precedes $J_i$;
when the two jobs $J_i$ and $J_{n+1 + i}$ are both scheduled after $J_{2n+2}$, then $J_i$ precedes $J_{n+1 + i}$.

\subsection{Proof of ``only if''}
In this subsection, we show that if there is a subset $X_1 \subset X$ of sum exactly $B$,
then the initial infeasible schedule $\pi^0$ can be converted into a feasible schedule $\pi$ with the total completion time no more than $Q = Q_0 + B$.
We also demonstrate the repeated job swapping procedure leading to this successful schedule $\pi$.

Suppose the indices of the elements in the subset $X_1$ are $\{i_1, i_2, \ldots, i_m\}$, satisfying $1 \le i_1 < i_2 < \ldots < i_m \le n$.
Starting with the initial schedule $\pi^0$, we sequentially swap the job $J_{i_\ell - 1}$ with the job $J_{n+1 + i_\ell}$, for $\ell = 1, 2, \ldots, m$.
Let $\pi^{\ell}$ denote the schedule after the $\ell$-th job swapping.

\begin{lemma}
\label{lemma03}
For each $1 \le \ell \le m$,
\begin{itemize}
\parskip=0pt
\item
	the schedule $\pi^{\ell}$ with the separation job $J_{2n+2}$ satisfies the properties in Lemma~\ref{lemma02};
\item
	the $\ell$-th job swapping decreases the total machine deterioration before the separation job $J_{2n+2}$ by $2 x_{i_\ell}$;
\item
	the $\ell$-th job swapping increases the total completion time by $x_{i_\ell}$.
\end{itemize}
\end{lemma}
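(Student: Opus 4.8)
The plan is to prove the three items simultaneously by induction on $\ell$, carrying along a precise description of which job occupies each slot. Index the $n+1$ slots before $J_{2n+2}$ by $1,\dots,n+1$ and the $n+1$ slots after it by $1,\dots,n+1$. The bookkeeping invariant I would maintain is: in $\pi^\ell$ the $k$-th before-slot holds the copy $J_{n+1+k}$ of $J_k$ when $k\in\{i_1,\dots,i_\ell\}$ and holds $J_{k-1}$ otherwise, while the $j$-th after-slot holds $J_{i_t-1}$ when $n+1-j=i_t$ for some $t\le\ell$ and holds the copy of $J_{n+1-j}$ otherwise; in all cases a before-job of processing time $p$ has deterioration $M-2p$ and an after-job of processing time $p$ has $p+\delta=M-p$. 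The inductive step is just the $\ell$-th swap: since $i_\ell$ is distinct from all earlier indices $i_1,\dots,i_{\ell-1}$, the $i_\ell$-th before-slot of $\pi^{\ell-1}$ still holds $J_{i_\ell-1}$ and the $(n+1-i_\ell)$-th after-slot still holds $J_{n+1+i_\ell}$, and exchanging exactly these two yields the invariant for $\pi^\ell$. The delicate case to write out carefully here — the part I expect to take the most work — is when $X_1$ contains consecutive indices $i_{\ell-1}$ and $i_{\ell-1}+1$, so that the slot vacated or filled by one swap neighbours a job moved by the next; the distinctness argument still closes it but deserves an explicit check. I should also record that $\pi^\ell$ is infeasible for $\ell<m$ (the maintenance level falls to $-2(B-\sum_{t\le\ell}x_{i_t})$ before the first $\MA$) and that $J_{2n+2}$ is the designated ``separation'' position throughout, so ``$\pi^\ell$ satisfies the properties of Lemma~\ref{lemma02}'' refers to the three structural bullets of that lemma.

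Granting the invariant, the first item follows at once. The before-slot processing times, read along the slots, are $p_k$ or $p_{k-1}$ in slot $k$, hence nondecreasing because $p_0\le p_1\le\cdots\le p_n$; that is the SPT order. Dually the after-slot values $p+\delta$ are $M-p_{n-j}$ or $M-p_{n+1-j}$ in slot $j$, again monotone in $j$; that is the SSF order. For inequality~(\ref{eq3}) at $J_{2n+2}$: slot $n+1$ before is never touched, so the job immediately preceding $J_{2n+2}$ is always $J_n$ with $p_n=2B$ and $\delta_n=M-4B$; the job immediately following $J_{2n+2}$ has processing time $p_{n-1}$ or $p_n$, in either case at most $2B$; the leftover maintenance level is $\delta=\ML_0-\sum_{\text{before}}\delta_i=-2(B-\sum_{t\le\ell}x_{i_t})\le 0$; and $\delta_{2n+2}=0$. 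Plugging these into~(\ref{eq3}) and using $M>(4n+8)B$ from~(\ref{eq4}) to dominate the $O(B)$ terms, the three quantities come out in the required order, exactly as in the displayed chain verified for $\pi^0$.

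The second item is the one-line computation $\delta_{i_\ell}-\delta_{i_\ell-1}=(M-2p_{i_\ell})-(M-2p_{i_\ell-1})=-2(p_{i_\ell}-p_{i_\ell-1})=-2x_{i_\ell}$, since the $\ell$-th swap replaces, among the jobs before $J_{2n+2}$, the job $J_{i_\ell-1}$ by a copy of $J_{i_\ell}$. For the third item I would decompose the objective as $Q=(\text{before-jobs part})+(n+2)\,C_{2n+2}+(\text{after-jobs part relative to }C_{2n+2})$, using that $\pi^\ell$ inserts no $\MA$ among the first $n+1$ jobs (its first $\MA$, of length $\delta_{2n+2}-\delta$, sits right before $J_{2n+2}$) and that each after-job is immediately preceded by an $\MA$ of length equal to its own deterioration; then the before part is the SPT-weighted sum with the $s$-th smallest before-processing-time carrying weight $n+2-s$, and the after part is the SSF-weighted sum with the $s$-th smallest after-value $p+\delta$ carrying weight $n+2-s$. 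The $\ell$-th swap raises the $i_\ell$-th before-slot processing time from $p_{i_\ell-1}$ to $p_{i_\ell}$, i.e.\ by $x_{i_\ell}$ at weight $n+2-i_\ell$; it changes $C_{2n+2}$ by $+x_{i_\ell}$ in total processing time and $-2x_{i_\ell}$ in the length of the first $\MA$, i.e.\ by $-x_{i_\ell}$ at weight $n+2$; and it raises the $(n+1-i_\ell)$-th after-slot value from $M-p_{i_\ell}$ to $M-p_{i_\ell-1}$, i.e.\ by $x_{i_\ell}$ at weight $i_\ell+1$. Adding the three contributions gives $\big[(n+2-i_\ell)-(n+2)+(i_\ell+1)\big]x_{i_\ell}=x_{i_\ell}$, the claimed increase.
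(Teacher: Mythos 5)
Your proposal is correct and follows essentially the same route as the paper: induction over the swaps with a slot-tracking invariant, the same one-line computation $\delta_{i_\ell}-\delta_{i_\ell-1}=-2x_{i_\ell}$ for the deterioration, and the same net $+x_{i_\ell}$ for the completion time. The only cosmetic difference is that for the third item you reorganize the paper's job-by-job completion-time deltas into a positional weighted-sum decomposition of the objective (weights $n+2-s$ for the prefix, $n+2$ for $C_{2n+2}$, and $n+2-s$ for the suffix), which is an equivalent accounting and matches the structure of $Q_0$ in Eq.~(\ref{eq5}) term by term.
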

\begin{proof}
Recall that the two jobs $J_{i_\ell}$ and $J_{n+1 + i_\ell}$ are identical.
Before the $\ell$-th job swapping between $J_{i_\ell - 1}$ and $J_{n+1 + i_\ell}$ (in the schedule $\pi^{\ell-1}$),
the jobs in between $J_{i_\ell - 1}$ and $J_{n+1 + i_\ell}$ are
\[
(J_{i_\ell - 1}, J_{i_\ell}, J_{i_\ell + 1}, \ldots, J_n, J_{2n+2}, J_{2n+1}, J_{2n}, \ldots, J_{n+1 + i_\ell + 1}, J_{n+1 + i_\ell}).
\]
After the swapping (in the schedule $\pi^\ell$) this sub-schedule becomes
\[
(J_{n+1 + i_\ell}, J_{i_\ell}, J_{i_\ell + 1}, \ldots, J_n, J_{2n+2}, J_{2n+1}, J_{2n}, \ldots, J_{n+1 + i_\ell + 1}, J_{i_\ell - 1}).
\]
By a simple induction, all jobs before $J_{n+1 + i_\ell}$ have their processing times less than $p_{i_\ell}$,
and thus the jobs before the separation job $J_{2n+2}$ are in the SPT order;
for a similar reason, the jobs after the separation job $J_{2n+2}$ are in the SSF order.

By the $\ell$-th job swapping, the change in the total machine deterioration before the separation job $J_{2n+2}$ is
$\delta_{i_\ell} - \delta_{i_\ell - 1} = - 2(p_{i_\ell} - p_{i_\ell - 1}) = - 2x_{i_\ell}$, that is, decreases by $2x_{i_\ell}$.
Therefore the duration of the first $\MA$ also decreases by $2 x_{i_\ell}$.
Since $J_n$ always directly precedes $J_{2n+2}$ and $p_n < p_{2n+2}$, the first half of Eq.~(\ref{eq3}) holds;
since $p_{2n+2} + \delta_{2n+2}$ is the smallest among all jobs, the second half of Eq.~(\ref{eq3}) holds.
That is, the schedule $\pi^\ell$ satisfies all properties in Lemma~\ref{lemma02}.

For ease of presentation, let $C_i$ denote the completion time of the job $J_i$ in the schedule $\pi^\ell$, and
let $C'_i$ denote the completion time of the job $J_i$ in the schedule $\pi^{\ell - 1}$.
Comparing to the schedule $\pi^{\ell - 1}$ ($\ell \ge 1$),
after the $\ell$-th job swapping between $J_{i_\ell - 1}$ and $J_{n+1 + i_\ell}$,
\begin{itemize}
\parskip=0pt
\item
	the completion time of jobs preceding $J_{n+1 + i_\ell}$ is unchanged;
\item
	$C_{n+1 + i_\ell} - C'_{i_\ell - 1} = p_{i_\ell} - p_{i_\ell - 1} = x_{i_\ell}$;
\item
	the completion time of each job in between $J_{i_\ell}$ and $J_n$ (inclusive, $n - i_\ell +1$ of them) increases by $x_{i_\ell}$;
\item
	the duration of the first $\MA$ decreases by $2 x_{i_\ell}$;
\item
	the completion time of each job in between $J_{2n+2}$ and $J_{n+1+i_\ell + 1}$ (inclusive, $n - i_\ell +1$ of them) decreases by $x_{i_\ell}$;
\item
	$C_{i_\ell - 1} - C'_{n+1+i_\ell} = - x_{i_\ell} + (\delta_{i_\ell - 1} + p_{i_\ell - 1}) - (\delta_{i_\ell} + p_{i_\ell}) = 0$;
\item
	from the last item, the completion time of jobs succeeding $J_{i_\ell - 1}$ is unchanged.
\end{itemize}
In summary, there are $(n - i_\ell + 2)$ jobs of which the completion time increases by $x_{i_\ell}$ and
$(n - i_\ell + 1)$ jobs of which the completion time decreases by $x_{i_\ell}$.
Therefore, the $\ell$-th job swapping between $J_{i_\ell - 1}$ and $J_{n+1 + i_\ell}$ increases the total completion time by $x_{i_\ell}$.
This finishes the proof.
\end{proof}

\begin{theorem}
\label{thm04}
If there is a subset $X_1 \subset X$ of sum exactly $B$,
then there is a feasible schedule $\pi$ to the instance $I$ with the total completion time no more than $Q = Q_0 + B$.
\end{theorem}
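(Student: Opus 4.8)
The plan is to apply the repeated job swapping procedure described above, using exactly the indices of the subset $X_1$, and to track two quantities through the sequence of schedules $\pi^0, \pi^1, \ldots, \pi^m$: the total machine deterioration before the separation job $J_{2n+2}$, and the total completion time of jobs. The key point is that $X_1$ having sum exactly $B$ is precisely what is needed to make the final schedule feasible while keeping the objective within the threshold.

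First I would invoke Lemma~\ref{lemma03} inductively: starting from $\pi^0$ and performing the swaps for $\ell = 1, 2, \ldots, m$, each $\pi^\ell$ satisfies the structural properties of Lemma~\ref{lemma02} (so in particular each $\pi^\ell$ is still a legitimate schedule in the restricted class, with $J_{2n+2}$ as separation job). Summing the second bullet of Lemma~\ref{lemma03} over all $\ell$, the total machine deterioration before $J_{2n+2}$ decreases by $2\sum_{\ell=1}^m x_{i_\ell} = 2\sum_{x \in X_1} x = 2B$. In $\pi^0$ the deficit before the first $\MA$ is exactly $2B$ (since $\ML_0 = \sum_{i=0}^n \delta_i - 2B$ and the jobs $J_0, \ldots, J_n$ together deteriorate by $\sum_{i=0}^n \delta_i$), so after the $m$ swaps this deficit becomes $0$: the maintenance level never goes negative before the first $\MA$, and the schedule $\pi = \pi^m$ is feasible. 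I should also check that the first $\MA$ now has duration $2B - 2B = 0$ — equivalently the first $\MA$ disappears — which is consistent with feasibility; and that no other $\MA$ is affected, since the swaps only move jobs between the prefix and the suffix without changing the multiset of deteriorations on each side beyond the $J_{i_\ell-1} \leftrightarrow J_{n+1+i_\ell}$ exchanges already accounted for.

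Next, summing the third bullet of Lemma~\ref{lemma03} over $\ell = 1, \ldots, m$, the total completion time of the final schedule $\pi^m$ equals $Q_0 + \sum_{\ell=1}^m x_{i_\ell} = Q_0 + B = Q$. Hence $\pi = \pi^m$ is a feasible schedule with total completion time exactly $Q$, which is $\le Q = Q_0 + B$, proving the theorem. I would present this as a short argument: state the induction, sum the two per-swap quantities, conclude feasibility from the deterioration bookkeeping and the objective bound from the completion-time bookkeeping.

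The main obstacle — and the place to be careful — is the bookkeeping that justifies feasibility, namely making sure that reducing the \emph{total} deterioration before $J_{2n+2}$ by exactly $2B$ really translates into a feasible maintenance-level profile (not merely a feasible value at the separation point). This relies on the fact that in $\pi^0$ the only infeasibility is the single negative excursion before the first $\MA$, that $\ML^{\max} = \sum_{i=0}^n \delta_i$ is large enough that no $\MA$ in the suffix ever needs to restore beyond the maximum, and that each $\pi^\ell$ continues to satisfy Lemma~\ref{lemma01}'s ``push maintenance as late as possible'' structure so that the maintenance-duration accounting of Lemma~\ref{lemma03} is valid. Once that is in place, the arithmetic $2\sum_{x \in X_1} x = 2B$ and $\sum_{x \in X_1} x = B$ does the rest, and the choice of $M$ in Eq.~(\ref{eq4}) guarantees the structural inequalities of Lemma~\ref{lemma02} hold throughout so that each intermediate $\pi^\ell$ is a well-defined schedule in our restricted class.
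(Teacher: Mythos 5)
Your proposal is correct and follows essentially the same route as the paper: apply the $m$ swaps $J_{i_\ell-1}\leftrightarrow J_{n+1+i_\ell}$ prescribed before Lemma~\ref{lemma03}, sum that lemma's second bullet to see the deterioration before $J_{2n+2}$ drops by $2B$ so the first $\MA$ shrinks to zero duration and $\pi^m$ is feasible, and sum the third bullet to get total completion time exactly $Q_0+B$. The extra care you flag about the maintenance-level profile is already absorbed by Lemma~\ref{lemma01}/\ref{lemma03}, so nothing further is needed.
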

\begin{proof}
Let the indices of the elements in the subset $X_1$ be $\{i_1, i_2, \ldots, i_m\}$, such that $1 \le i_1 < i_2 < \ldots < i_m \le n$.
Starting with the initial schedule $\pi^0$, we sequentially swap the job $J_{i_\ell - 1}$ with the job $J_{n+1 + i_\ell}$, for $\ell = 1, 2, \ldots, m$.
Let $\pi^{\ell}$ denote the schedule after the $\ell$-th job swapping, and let $Q_\ell$ denote the total completion time of jobs in $\pi^\ell$.

From Lemma~\ref{lemma03} we know that the ending schedule $\pi^m$ satisfies all the properties in Lemma~\ref{lemma02}.
Also, the total machine deterioration before the separation job $J_{2n+2}$ in $\pi^m$ is
\[
\sum_{i=0}^n \delta_i - 2 \sum_{\ell=1}^m x_{i_\ell} = \sum_{i=0}^n \delta_i - 2B = \ML_0,
\]
suggesting that $\pi^m$ is a feasible schedule.
(The first $\MA$ has zero duration and thus becomes unnecessary.)

Moreover, the total completion time of jobs in $\pi^m$ is $Q_m = Q_0 + \sum_{\ell=1}^m x_{i_\ell} = Q_0 + B$.
Therefore, the schedule $\pi^m$ obtained from the initial schedule $\pi^0$ through the repeated job swapping procedure is a desired one.
\end{proof}

\subsection{Proof of ``if''}
In this subsection, we show that if there is a feasible schedule $\pi$ to the constructed instance $I$ with the total completion time no more than $Q = Q_0 + B$,
then there is a subset $X_1 \subset X$ of sum exactly $B$.
Assume without loss of generality that the schedule $\pi$ satisfies the properties in Lemma~\ref{lemma02}.
We start with some structure properties which the schedule $\pi$ must have.

\begin{lemma}
\label{lemma05}
Excluding the job $J_{2n+2}$,
there are at least $n$ and at most $(n+1)$ jobs scheduled before the first $\MA$ in the schedule $\pi$.
\end{lemma}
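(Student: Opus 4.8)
The plan is to pin down the position of the separation job in the feasible schedule $\pi$ by showing it must be $J_{2n+2}$, and then to count jobs on either side using the deterioration budget. First I would argue that $J_{2n+2}$ is the separation job. Every job $J_i$ with $i \in \{0,1,\ldots,n\} \cup \{n+1,\ldots,2n\}$ has deterioration $\delta_i = M - 2p_i \ge M - 4B > 0$, which is huge relative to $\ML^{\max} = \sum_{i=0}^n \delta_i$ only through the count: summing any $n+1$ of these $\delta$'s already exceeds $\ML^{\max}$ slightly (indeed $\sum_{i=0}^n \delta_i = \ML^{\max}$, and any other choice of $n+1$ indices from the $2n+1$ ``large'' jobs has the same or larger sum by the pairing $\delta_i = \delta_{n+1+i}$ together with $p_i$ increasing). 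Since $\ML_0 = \ML^{\max} - 2B$ and every prefix of large jobs of length $n+1$ has deterioration sum $\ge \ML^{\max} > \ML_0$, at most $n$ large jobs can precede the first $\MA$ together with enough room; more carefully, because $J_{2n+2}$ has $\delta_{2n+2} = 0$ it is the unique cheap job, so the prefix before the first $\MA$ consists of $J_{2n+2}$ (possibly) plus some large jobs. I would then show $J_{2n+2}$ itself cannot force the first $\MA$ in the sense required of a separation job unless large jobs have already consumed the budget, and conclude $J_{2n+2}$ sits at or just after the separation point; the cleanest route is to invoke Lemma~\ref{lemma02}'s adjacency inequality~(\ref{eq3}) to rule out any large job being the separation job, since such a job would need $p_{\pi_{k-1}} + \min\{\delta_{\pi_{k-1}}, \delta_{\pi_k} - \delta\} \le p_{\pi_k} + (\delta_{\pi_k} - \delta)$ with $\delta_{\pi_k} - \delta$ of order $M$, which is inconsistent with the SSF ordering of the suffix once one observes $p_{2n+2} + \delta_{2n+2} = M - 2B$ is the smallest sum-of-processing-time-and-deterioration.

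Once $J_{2n+2}$ is the separation job, the counting is the heart of the matter. Let $S$ be the set of large-job indices scheduled before the first $\MA$ (equivalently before $J_{2n+2}$), so $|S|$ is what we must bound. By Lemma~\ref{lemma01} the total deterioration of the jobs strictly before the first $\MA$ is at most $\ML_0 = \sum_{i=0}^n \delta_i - 2B$, while the deterioration sum of any $n+1$ distinct large jobs is at least $\sum_{i=0}^n \delta_i$ (using $\delta_{n+1+i} = \delta_i$ and that shifting an index from $i$ to $n+1+i$ leaves $\delta$ unchanged, while among distinct-pair representatives the minimum total is attained at $\{0,1,\ldots,n\}$). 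Hence $|S| \le n$: if $|S| \ge n+1$ the deterioration before the $\MA$ would be at least $\sum_{i=0}^n \delta_i > \ML_0$, impossible. For the lower bound $|S| \ge n$, suppose $|S| \le n-1$; then at least $n+2$ large jobs lie after $J_{2n+2}$. I would show this pushes the completion times of those suffix jobs up by an amount exceeding $B$ relative to $Q_0$: each large job moved from the prefix to the suffix crosses over the bulk $p_{2n+2} = M - 2B$ and all the large deteriorations in the suffix, contributing an increase of order $M$ to the objective, which dwarfs the slack $B$ in the threshold $Q = Q_0 + B$ by the choice $M > (4n+8)B$ in~(\ref{eq4}). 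Comparing against the initial schedule $\pi^0$ (which has exactly $n$ large jobs before $J_{2n+2}$ and objective $Q_0$), any feasible schedule with fewer than $n$ large jobs before the separation job has objective $> Q_0 + B$, contradicting feasibility of the threshold.

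The main obstacle I anticipate is making the lower-bound direction fully rigorous: I must quantify precisely the objective penalty incurred when a large job is scheduled after $J_{2n+2}$ rather than before it, accounting simultaneously for (i) the extra $p_{2n+2}$ and suffix deteriorations in its own completion time, (ii) the ripple effect on all later jobs, and (iii) the partially offsetting decrease in the first $\MA$'s duration (which is bounded by $2B$ in total since $\delta = \ML_0 - \sum_{i \in S}\delta_i$ and the $\MA$ has duration $\delta_{2n+2} - \delta$ capped by the available slack). The key quantitative fact to isolate is that one ``missing'' large job before $J_{2n+2}$ costs at least $p_{2n+2} = M - 2B > B$ in the objective net of all savings, which with $M > (4n+8)B$ is far more than the budget $B$ allows. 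I would organize this as: assume $|S| \le n-1$, exhibit a large job $J_t$ in the suffix, lower-bound $C_t$ by $p_{2n+2}$ plus a positive quantity, upper-bound the total $\MA$-duration savings by $2B$, and conclude the objective exceeds $Q_0 + B$.
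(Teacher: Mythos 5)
Your upper-bound argument contains a genuine error. You claim that the deterioration sum of any $n{+}1$ distinct ``large'' jobs is at least $\sum_{i=0}^n \delta_i$, on the grounds that the minimum is attained by choosing one representative from each pair $\{J_i, J_{n+1+i}\}$. But a subset of $n{+}1$ large jobs need not contain one job from each pair: it can contain \emph{both} members of the small-deterioration pairs and neither member of the large-deterioration pairs. For instance, with $n=1$ the two jobs $J_1$ and $J_3$ have total deterioration $2(M-4B)=2M-8B$, which is strictly less than $\ML_0 = 2M-6B$, so both fit before the first $\MA$. Consequently your conclusion $|S|\le n$ is false, and it even contradicts the statement you are proving, which allows (and, in Cases 1 and 2 of Lemma~\ref{lemma07}, actually uses) $n{+}1$ jobs before the first $\MA$. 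The correct upper bound only rules out $n{+}2$ or more large jobs: their total deterioration would exceed $(n+2)(M-4B) > (n+1)M > \sum_{i=0}^n\delta_i > \ML_0$ by the choice of $M$, violating feasibility.

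For the lower bound you take a detour through the objective threshold $Q = Q_0 + B$, which you yourself flag as hard to make rigorous, and which additionally leans on structural claims about the separation job that are backwards (in the paper $J_{2n+2}$ can \emph{never} be the separation job because $\delta_{2n+2}=0$ and zero-duration $\MA$s are disallowed; the separation job is always one of the large jobs). None of this machinery is needed: since $\delta_i \le M$ for every job and $\ML_0 = \sum_{i=0}^n\delta_i - 2B > (n+1)(M-4B)-2B > nM$, any $n$ jobs have total deterioration at most $nM < \ML_0$, so under the normalization of Lemma~\ref{lemma01} (the first $\MA$ is delayed until absolutely necessary) at least $n$ jobs must precede the first $\MA$. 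Both directions are short deterioration-budget computations; the objective threshold plays no role in this lemma.
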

\begin{proof}
Recall that in Eq.~(\ref{eq4}) we set $M$ to be a large value such that $M > (4n + 8)B$.
Using $M > (4n + 6)B$, it follows from $M - 4B = \delta_n < \delta_{n-1} < \ldots < \delta_1 < \delta_0 = M$ that the initial machine maintenance level
\[
\ML_0 = \sum_{i=0}^n \delta_i - 2B > (n+1) (M - 4B) - 2B = n M + M - (4n + 6)B > n M.
\]
We thus conclude that at least $n$ jobs, excluding $J_{2n+2}$ which has $0$ deterioration, can be processed before the first $\MA$.

Nevertheless, if there were more than $(n+1)$ jobs scheduled before the first $\MA$, excluding $J_{2n+2}$, then their total machine deterioration would be
greater than $(n + 2)(M - 4B)$.
Using $M > (4n + 8)B$, we have
\[
(n + 2)(M - 4B) = (n + 1)M + M - (4n + 8)B > (n + 1)M > \sum_{i=0}^n \delta_i > \ML_0,
\]
contradicting the feasibility of the schedule $\pi$.
\end{proof}

\begin{lemma}
\label{lemma06}
There are at most $(n+1)$ jobs scheduled after the job $J_{2n+2}$ in the schedule $\pi$.
\end{lemma}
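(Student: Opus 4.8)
The plan is to argue by contradiction: assume $\pi$ schedules at least $n+2$ jobs after $J_{2n+2}$, let $J_{\pi_k}$ be the separation job of $\pi$, and derive a conflict with the structure already established. First I would note that $J_{2n+2}$ cannot itself be the separation job, since the separation job satisfies $\sum_{j=1}^{k}\delta_{\pi_j}>\ML_0\ge\sum_{j=1}^{k-1}\delta_{\pi_j}$, which is impossible when $\delta_{2n+2}=0$. Hence $J_{2n+2}$ lies either strictly after or strictly before $J_{\pi_k}$, and I would split on this. If $J_{2n+2}$ lies after $J_{\pi_k}$, then $J_{2n+2}$ is not among the $k-1$ jobs preceding the first $\MA$, so Lemma~\ref{lemma05} forces $k-1\ge n$; hence $J_{\pi_k}$ is at position $\ge n+1$ and $J_{2n+2}$ at position $\ge n+2$, so at most $2n+3-(n+2)=n+1$ jobs follow it, a contradiction.

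The remaining case is $J_{2n+2}$ strictly before $J_{\pi_k}$. I would first observe that $J_{2n+2}$ must sit immediately before $J_{\pi_k}$, i.e.\ $\pi_{k-1}=J_{2n+2}$: the pre-separation jobs are in SPT order by Lemma~\ref{lemma02}, and $p_{2n+2}=M-2B$ strictly exceeds the processing time ($\le 2B<M-2B$ by Eq.~(\ref{eq4})) of every other job, so $J_{2n+2}$ is last among them. Write $\delta=\ML_0-\sum_{i=1}^{k-1}\delta_{\pi_i}\in[0,\delta_{\pi_k})$ for the residual maintenance level before the first $\MA$. The left-hand inequality of Lemma~\ref{lemma02} at the separation position, using $\delta_{\pi_{k-1}}=\delta_{2n+2}=0$ together with $p_{\pi_k}+\delta_{\pi_k}=M-p_{\pi_k}$ (since $J_{\pi_k}$ is a copy job), reduces to $M-2B\le M-p_{\pi_k}-\delta$, hence $\delta\le 2B$. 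The $k-1$ jobs before $J_{\pi_k}$ consist of $J_{2n+2}$ (deterioration $0$) and $k-2$ copy jobs of deterioration $\le M$ each, so $(k-2)M\ge\sum_{i=1}^{k-1}\delta_{\pi_i}=\ML_0-\delta\ge\ML_0-2B=(n+1)M-2\sum_{j=0}^{n}p_j-4B$; since $2\sum_{j=0}^{n}p_j+4B\le(4n+4)B<M$ by Eq.~(\ref{eq4}), this forces $k\ge n+3$. Therefore $J_{2n+2}=\pi_{k-1}$ is at position $\ge n+2$, so again at most $n+1$ jobs follow it, a contradiction. The two cases together prove the lemma.

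The main obstacle is this second case, which is also the only place where the size condition $M>(4n+8)B$ is genuinely exploited: from the single adjacency inequality of Lemma~\ref{lemma02} one must extract the quantitative statement that, once $J_{2n+2}$ precedes the separation job, the separation job cannot be pulled earlier than position $n+3$. It is worth noting in the write-up why one cannot instead argue via the objective threshold $Q_0+B$ here: the surplus in total completion time one could hope to exhibit is only of order $M$, whereas the unavoidable lower-order error of such a completion-time estimate is of order $n^2B$, so $M>(4n+8)B$ is of the wrong order to help; the exclusion has to come from the schedule's adjacency structure rather than from counting completion times.
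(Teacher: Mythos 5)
Your proof is correct, and it takes a genuinely different route from the paper's. The paper argues against the objective threshold: since $p_{2n+2}\le p_i+\delta_i$ for every job $J_i$ scheduled after $J_{2n+2}$ (and $p_{2n+2}\le p_i+(\delta_i-\delta)$ for the separation job when $J_{2n+2}$ sits just before the first $\MA$), a job $\ell$ positions after $J_{2n+2}$ completes no earlier than $(\ell+1)\,p_{2n+2}$; with $n+2$ jobs after $J_{2n+2}$ the last $n+3$ completion times already sum to at least $\frac{(n+3)(n+4)}{2}(M-2B)$, which exceeds $Q_0+B$ because the surplus is at least $(n+3)M-(3n^2+13n+16)B>0$ under Eq.~(\ref{eq4}). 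Your argument instead stays entirely on the deterioration side: Lemma~\ref{lemma05} disposes of the case where $J_{2n+2}$ follows the separation job; otherwise the SPT property forces $J_{2n+2}$ to sit immediately before the separation job, the adjacency inequality of Lemma~\ref{lemma02} yields $\delta\le 2B$, and a deterioration count pushes the separation job to position at least $n+3$. Both are valid; yours reuses Lemma~\ref{lemma05} and the local-swap inequality and needs no completion-time estimate, while the paper's is a self-contained counting bound. One correction, though: your closing remark is mistaken --- the threshold argument is exactly what the paper does, and it works. The surplus gained from one additional job after $J_{2n+2}$ is not of order $M$ but of order $nM$: the lower bound on the sum of the trailing completion times jumps from $\frac{(n+2)(n+3)}{2}\,p_{2n+2}$ to $\frac{(n+3)(n+4)}{2}\,p_{2n+2}$, a gain of $(n+3)\,p_{2n+2}$, and since $M>(4n+8)B$ makes $(n+3)M$ dominate the $O(n^2)B$ error terms, the hypothesis on $M$ is of precisely the right order for that route as well.
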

\begin{proof}
We prove the lemma by contradiction.
Firstly, noting that the job $J_{2n+2}$ has a much larger processing time compared to any other job ($M - 2B$ versus $2B$),
we conclude that the earliest possible position for $J_{2n+2}$ in the schedule $\pi$ is right before the first $\MA$.
We disallow a zero-duration $\MA$ and thus the job $J_{2n+2}$ can never be the separation job in $\pi$ due to $\delta_{2n+2} = 0$.

If $J_{2n+2}$ is scheduled after the separation job, by Eq.~(\ref{eq2}) or the SSF rule, for every job $J_i$ scheduled after $J_{2n+2}$ we have
$p_{2n+2} \le p_i + \delta_i$.
If $J_{2n+2}$ is scheduled right before the first $\MA$, by Eq.~(\ref{eq3}), for the separation job $J_i$ we have
$p_{2n+2} \le p_i + (\delta_i - \delta)$;
by Eqs.~(\ref{eq2}) and (\ref{eq3}), for every other job $J_i$ scheduled after $J_{2n+2}$ we have
$p_{2n+2} \le p_i + \delta_i$.
Therefore, the completion time of a job scheduled $\ell$ positions after the job $J_{2n+2}$ is at least $(\ell + 1) \times p_{2n+2}$.
If there were $(n+2)$ jobs scheduled after $J_{2n+2}$, then the total completion time of the last $(n+3)$ jobs would be at least
\[
\sum_{\ell=0}^{n+2} (\ell+1) p_{2n+2} = \frac {(n+3)(n+4)}2 p_{2n+2} = \frac {(n+3)(n+4)}2 (M - 2B).
\]

However, using $p_j \le 2B$ for $j \ne 2n+2$, one sees that Eq.~(\ref{eq5}) can be simplified as
\[
\begin{array}{rcl}
Q_0	&=	&\displaystyle\sum_{j=0}^n (n-j+1) p_j + (n+2)\left(\sum_{j=0}^n p_j + 2B + (M - 2B)\right) + \sum_{j=0}^n (j+1)(M - p_j)\\
	&=	&\displaystyle\frac {(n+2)(n+3)}2 M + 2\sum_{j=0}^n (n-j+1) p_j\\
	&\le&\displaystyle\frac {(n+2)(n+3)}2 M + 4B\sum_{j=0}^n (n-j+1)\\
	&=	&\displaystyle\frac {(n+2)(n+3)}2 M + 2(n+1)(n+2)B.
\end{array}
\]
Using $M > (3n+6)B$,
\[
\frac {(n+3)(n+4)}2 (M - 2B) \ge Q_0 + (n+3)M - (3n^2 + 13n + 16) B > Q_0 + (2n+2)B,
\]
that is, we would have the total completion time of the last $(n+3)$ jobs in $\pi$ strictly greater than $Q = Q_0 + B$,
contradicting to our assumption.
\end{proof}

Combining Lemmas~\ref{lemma05} and \ref{lemma06}, we have the following lemma regarding the position of $J_{2n+2}$ in the schedule $\pi$.

\begin{lemma}
\label{lemma07}
In the schedule $\pi$, the position of the job $J_{2n+2}$ has three possibilities:
\begin{description}
\parskip=0pt
\item[Case 1:]
	There are $(n+1)$ jobs before the first $\MA$, $\pi_{n+2} = 2n+2$, and $J_{\pi_{n+3}}$ is the separation job.
\item[Case 2:]
	There are $(n+1)$ jobs before the first $\MA$, $J_{\pi_{n+2}}$ is the separation job, and $\pi_{n+3} = 2n+2$.
\item[Case 3:]
	There are $n$ jobs before the first $\MA$, $J_{\pi_{n+1}}$ is the separation job, and $\pi_{n+2} = 2n+2$.
\end{description}
\end{lemma}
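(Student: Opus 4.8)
The plan is to combine the two counting bounds from Lemmas~\ref{lemma05} and \ref{lemma06} with the structural restrictions of Lemma~\ref{lemma02}, after first pinning down the small set of slots that $J_{2n+2}$ can occupy. I would begin by observing that the schedule does contain a first $\MA$ (hence a separation job): since $\ML_0 = \sum_{i=0}^n \delta_i - 2B$ is far below the total deterioration $2\sum_{i=0}^n \delta_i$ of all $2n+3$ jobs, the machine cannot process everything without maintenance. Next, because $p_{2n+2} = M-2B$ exceeds every other processing time (each at most $2B$) and the jobs before the first $\MA$ are in SPT order, if $J_{2n+2}$ is scheduled before the first $\MA$ it must be the last such job; symmetrically, because $p_{2n+2} + \delta_{2n+2} = M-2B \le M - p_i = p_i + \delta_i$ for every other job and the jobs after the separation job are in SSF order, if $J_{2n+2}$ is scheduled after the separation job it may (by the SSF tie-breaking already invoked in Lemma~\ref{lemma02}) be taken to be the job immediately after the separation job. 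Finally $J_{2n+2}$ cannot itself be the separation job, because feasibility forces the maintenance level $\delta = \ML_0 - \sum_{i<k} \delta_{\pi_i}$ just before the first $\MA$ to be non-negative while the separation job needs deterioration strictly larger than $\delta$, which $\delta_{2n+2} = 0$ cannot provide.

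Granting these facts, I would then do the two-way case split. Let $a$ denote the number of jobs other than $J_{2n+2}$ lying before the first $\MA$; Lemma~\ref{lemma05} gives $a \in \{n, n+1\}$, and there are $2n+3$ jobs in total. If $J_{2n+2}$ is the last job before the first $\MA$, then the separation job is at position $a+2$ and the number of jobs strictly after $J_{2n+2}$ is $2n+2-a$; Lemma~\ref{lemma06} requires $2n+2-a \le n+1$, which forces $a = n+1$ and is precisely Case~1. If instead $J_{2n+2}$ directly follows the separation job, then the separation job is at position $a+1$, $J_{2n+2}$ at position $a+2$, and the number of jobs after $J_{2n+2}$ is $2n+1-a \le n+1$, an inequality already guaranteed by $a \ge n$; so both $a = n$ and $a = n+1$ remain, yielding Case~3 and Case~2 respectively. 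Since the two slots exhaust the possibilities for $J_{2n+2}$, the three listed cases are the only ones.

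I expect the main obstacle to be the preliminary localisation of $J_{2n+2}$ rather than the counting: specifically, arguing that when $J_{2n+2}$ lies in the SSF suffix it is adjacent to the separation job, in the presence of the tie $p_n + \delta_n = p_{2n+1} + \delta_{2n+1} = p_{2n+2} + \delta_{2n+2} = M-2B$. I would settle this either by fixing the SSF tie-breaking rule so that $J_{2n+2}$ precedes any job of equal $p+\delta$, or by a one-line adjacent-swap argument in the spirit of Lemma~\ref{lemma02} showing such a swap does not increase $\sum_i C_i$; one also has to note, symmetrically, that $J_{2n+2}$ cannot be absorbed any earlier into the SPT prefix. Everything after that is bookkeeping with the number $2n+3$ and the inequalities $a \in \{n, n+1\}$ and (number of jobs after $J_{2n+2}$) $\le n+1$.
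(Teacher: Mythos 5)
Your proposal is correct and follows essentially the same route as the paper's proof: use the extremal properties of $J_{2n+2}$ (largest $p$, smallest $p+\delta$) together with $\delta_{2n+2}=0$ to confine it to the last slot before the first $\MA$ or the slot immediately after the separation job (handling the $p+\delta$ ties with $J_n, J_{2n+1}$ by a cost-neutral adjacent swap, which is the paper's ``possible job swapping''), and then combine the counts from Lemmas~\ref{lemma05} and~\ref{lemma06}. Your explicit case bookkeeping merely fills in what the paper dismisses as ``easy to distinguish.''
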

\begin{proof}
Note that the processing time of the job $J_{2n+2}$ is strictly greater than that of any other job,
while the sum of its processing time and machine deterioration ($p_{2n+2} + \delta_{2n+2}$) achieves the minimum.
Because $J_{2n+2}$ cannot act as the separation job due to $\delta_{2n+2} = 0$,
by Lemma~\ref{lemma02} it can only be either the last job scheduled before the first $\MA$ or the first job scheduled after the separation job
(through a possible job swapping, if necessary).
Using Lemmas~\ref{lemma05} and \ref{lemma06}, it is easy to distinguish the three possible cases stated in the lemma.
\end{proof}

Recall that the job order in the initial infeasible schedule $\pi^0$ is $(J_0, J_1, \ldots, J_n, J_{2n+2}, J_{2n+1},$ $J_{2n}, \ldots, J_{n+2}, J_{n+1})$,
and the first $\MA$ is executed before processing the job $J_{2n+2}$, which is regarded as the separation job (see Figure~\ref{fig02}).
In the sequel, we will again convert $\pi^0$ into our target schedule $\pi$ through a repeated job swapping procedure.
During such a procedure, the job $J_{2n+2}$ is kept at the center position,
and a job swapping always involves a job before $J_{2n+2}$ and a job after $J_{2n+2}$.

In Cases 1 and 3 of the schedule $\pi$, the job $J_{2n+2}$ is at the center position (recall that there are in total $2n + 3$ jobs),
and therefore the target schedule is well set.
In Case 2, $J_{2n+2}$ is at position $n+3$, not the center position;
we first exchange $J_{2n+2}$ and $J_{\pi_{n+2}}$ to obtain a schedule $\pi'$, which becomes our target schedule.
That is, we will first convert $\pi^0$ into $\pi'$ through a repeated job swapping procedure, 
and at the end exchange $J_{2n+2}$ back to the position $n+3$ to obtain the final schedule $\pi$.
In summary, our primary goal is to convert the schedule $\pi^0$ through a repeated job swapping procedure,
keeping the job $J_{2n+2}$ at the center position and keeping the first $\MA$ right before the job $J_{2n+2}$ (to be detailed next).
At the end, to obtain the target schedule $\pi$,
in Case 1, we swap the job $J_{2n+2}$ and the first $\MA$ ({\it i.e.}, moving the first $\MA$ one position backward);
in Case 2, we swap $J_{2n+2}$ and the immediate succeeding $\MA$ and the following job (with the $\MA$ merged with the first $\MA$);
in Case 3, we swap the first $\MA$ and its immediate preceding job ({\it i.e.}, moving the first $\MA$ one position forward).

In the target schedule ($\pi$ in Cases 1 and 3, or $\pi'$ in Case 2),
let $R = \{r_1, r_2, \ldots, r_m\}$ denote the subset of indices such that both $J_{r_j}$ and $J_{n+1 + r_j}$ are among the first $(n+1)$ jobs,
where $0 \le r_1 < r_2 < \ldots < r_m \le n$,
and $L = \{\ell_1, \ell_2, \ldots, \ell_m\}$ denote the subset of indices such that both $J_{\ell_j}$ and $J_{n+1 + \ell_j}$ are among the last $(n+1)$ jobs,
where $0 \le \ell_1 < \ell_2 < \ldots < \ell_m \le n$.
Note that $J_{2n+2}$ is at the center position in the target schedule, and thus it has to be $|R| = |L|$ and we let $m = |R|$.
Clearly, all these $\ell_j$'s and $r_j$'s are distinct from each other.

In the repeated job swapping procedure leading the initial infeasible schedule $\pi^0$ to the target feasible schedule,
the $j$-th job swapping is to swap the two jobs $J_{\ell_j}$ and $J_{n+1 + r_j}$.
The resultant schedule after the $j$-th job swapping is denoted as $\pi^j$, for $j = 1, 2, \ldots, m$.
In Section 3.1, the job swapping is ``{\em regular}'' in the sense that $\ell_j = r_j - 1$ for all $j$,
but now $\ell_j$ and $r_j$ do not necessarily relate to each other.
We remark that immediately after the swapping, a job sorting is needed to restore the SPT order for the prefix and the SSF order for the suffix
(see the last paragraph before Section 3.1 for possible re-indexing the jobs).

The following Lemma~\ref{lemma08} on the $j$-th job swapping, when $\ell_j < r_j$, is an extension of Lemma~\ref{lemma03}.

\begin{lemma}
\label{lemma08}
For each $1 \le j \le m$, if the schedule $\pi^{j-1}$ satisfies the first two properties in Lemma~\ref{lemma02} and $\ell_j < r_j$, then
\begin{itemize}
\parskip=0pt
\item
	the schedule $\pi^j$ satisfies the first two properties in Lemma~\ref{lemma02};
\item
	the $j$-th job swapping decreases the total machine deterioration before the center job $J_{2n+2}$ by
	$\delta_{\ell_j} - \delta_{r_j} = 2 \sum_{k=\ell_j+1}^{r_j} x_k$;
\item
	the $j$-th job swapping increases the total completion time by at least $\sum_{k=\ell_j+1}^{r_j} x_k$;
	and the increment equals $\sum_{k=\ell_j+1}^{r_j} x_k$ if and only if $\ell_j > r_{j-1}$.
\end{itemize}
\end{lemma}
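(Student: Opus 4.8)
The plan is to treat the three bullets in turn; the first two are short and the third is the whole point. For the first bullet, $J_{\ell_j}$ and $J_{n+1+r_j}$ are ordinary jobs of the $J_i$-type, so after exchanging them one only re-sorts the prefix into SPT order and the suffix into SSF order, breaking ties by the convention fixed just before Section~3.1; since the procedure keeps $J_{2n+2}$ in the center position with the first $\MA$ immediately before it, and $p_{2n+2}$ (resp. $p_{2n+2}+\delta_{2n+2}$) is the unique largest (resp. smallest) such value among all jobs, re-sorting the two blocks gives the first two properties of Lemma~\ref{lemma02}. For the second bullet, the prefix of $\pi^j$ is obtained from that of $\pi^{j-1}$ by replacing the job $J_{\ell_j}$ (deterioration $\delta_{\ell_j}=M-2p_{\ell_j}$) by a copy of $J_{r_j}$ (deterioration $\delta_{r_j}=M-2p_{r_j}$), so the total deterioration before $J_{2n+2}$ drops by $\delta_{\ell_j}-\delta_{r_j}=2(p_{r_j}-p_{\ell_j})=2\sum_{k=\ell_j+1}^{r_j}x_k$, using $p_i=\sum_{k=1}^{i}x_k$; correspondingly the first $\MA$ shortens by exactly this amount.

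For the third bullet I would first record the structure of $\pi^{j-1}$. Since all of $\ell_1<\cdots<\ell_m$ and $r_1<\cdots<r_m$ are distinct and the swaps are performed in order, a routine induction shows that, relative to $\pi^0$, the prefix of $\pi^{j-1}$ has one extra copy of index $r_k$ and no copy of index $\ell_k$ for each $k<j$, and dually for the suffix (in particular each swap is well defined). The only indices that thereby acquire a double copy in the prefix with processing time strictly between $p_{\ell_j}$ and $p_{r_j}$ are the $r_k$ with $k<j$ and $r_k>\ell_j$; set $T=\{r_k:k<j,\ r_k>\ell_j\}$ and note $T=\emptyset$ exactly when $r_{j-1}<\ell_j$, i.e.\ when $\ell_j>r_{j-1}$ (vacuously for $j=1$). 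I would then write the total completion time of any schedule of this shape (prefix in SPT order, then the first $\MA$, then $J_{2n+2}$, then the suffix in SSF order with one $\MA$ of duration equal to the deterioration of the following job before each suffix job) as a closed form: an SPT-weighted sum of the prefix processing times, plus $(n+2)\sum_{\text{prefix}}(p_i+\delta_i)$ — this absorbs the first $\MA$, whose duration is the total prefix deterioration minus $\ML_0$ — plus an SSF-weighted sum of the values $p_i+\delta_i$ over the suffix jobs — this absorbs the suffix $\MA$s — plus a term depending only on $I$.

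It then remains to compute how this closed form changes under the $j$-th swap, using throughout $p_i+\delta_i=M-p_i$ and $p_{r_j}-p_{\ell_j}=\sum_{k=\ell_j+1}^{r_j}x_k$. The change splits into three parts: the $n+2$ jobs from $J_{2n+2}$ on move earlier by $2\sum_{k=\ell_j+1}^{r_j}x_k$ (shorter first $\MA$) and later by $\sum_{k=\ell_j+1}^{r_j}x_k$ (heavier prefix), contributing $-(n+2)\sum_{k=\ell_j+1}^{r_j}x_k$; the prefix re-sort changes a weighted sum of pairwise minima of the prefix processing times; and the suffix re-sort changes the analogous expression, which after substituting $p_i+\delta_i=M-p_i$ becomes a weighted sum of pairwise maxima of the suffix processing times. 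When $T=\emptyset$ all cross-terms telescope and the three parts add up to exactly $\sum_{k=\ell_j+1}^{r_j}x_k$, which in particular recovers Lemma~\ref{lemma03} (there $\ell_j=r_j-1$, forcing $T=\emptyset$). When $T\ne\emptyset$, each $r_k\in T$ leaves an extra prefix copy of processing time $p_{r_k}\in(p_{\ell_j},p_{r_j})$ and deletes the matching suffix copy, which adds $p_{r_k}-p_{\ell_j}$ to the prefix contribution and $p_{r_k}-p_{\ell_j}$ to the suffix contribution, a surplus of $2\sum_{r_k\in T}(p_{r_k}-p_{\ell_j})>0$; hence the increment equals $\sum_{k=\ell_j+1}^{r_j}x_k$ if and only if $T=\emptyset$, i.e.\ if and only if $\ell_j>r_{j-1}$. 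I expect the main obstacle to be bookkeeping rather than ideas: re-sorting after the exchange displaces an entire interval of jobs, not just the two swapped, so one must track carefully which jobs move and by how much — keeping the tie-breaking between the old and new copies of $J_{r_j}$ consistent — in order to isolate the surplus term cleanly, and it is precisely the choice $p_i=\sum_{k\le i}x_k$, $\delta_i=M-2p_i$ that makes all the remaining cross-terms cancel.
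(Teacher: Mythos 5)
Your proposal is correct, and it takes a genuinely different route from the paper's. The paper handles the third bullet by decomposing the $j$-th swap into $r_j-\ell_j$ ``regular'' adjacent-index swaps (between $J_k$ and $J_{n+1+k+1}$ for $k=r_j-1,\ldots,\ell_j$, in that order so each one is well defined) and re-running the per-job accounting of Lemma~\ref{lemma03} on each, observing that a regular swap's increment exceeds $x_{k+1}$ exactly when previously moved jobs $J_{n+1+r_{k'}}$ lie between the swap position and the center; summing gives the ``at least, with equality iff $\ell_j>r_{j-1}$'' statement. You instead express the objective of any schedule of this shape as a closed form in the prefix and suffix processing-time multisets and difference it directly. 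Both arguments rest on the same structural fact about $\pi^{j-1}$ (duplicated copies of $p_{r_k}$ and missing copies of $p_{\ell_k}$ in the prefix, $k<j$), but yours produces an exact increment, $\sum_{k=\ell_j+1}^{r_j}x_k+2\sum_{r_k\in T}\bigl(p_{r_k}-p_{\ell_j}\bigr)$ with $T=\{r_k:k<j,\ r_k>\ell_j\}$, which I verified: writing the objective as a constant minus $\sum_t t\,q_t$ over the sorted prefix minus the analogous suffix term, and expanding each such sum via pairwise maxima (which also makes the expression manifestly independent of tie-breaking, disposing of the bookkeeping worry you raise), the change under the swap collapses to exactly this formula, and the ``iff'' follows since every surplus term is strictly positive. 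Your route buys a stronger, exact statement and avoids the ordering constraints on the regular swaps; the paper's route buys reuse of Lemma~\ref{lemma03} with essentially no new algebra.
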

\begin{proof}
Note that $0 \le r_1 < r_2 < \ldots < r_m \le n$, $0 \le \ell_1 < \ell_2 < \ldots < \ell_m \le n$, and all these $\ell_j$'s and $r_j$'s are distinct from each other.
Since $\ell_j < r_j$, we assume without loss of generality that $r_{j'-1} < \ell_j < r_{j'}$ for some $j' \le j$, that is,
the $(j-j')$ jobs $J_{n+1 + r_{j'}}, J_{n+1 + r_{j'+1}}, \ldots, J_{n+1 + r_{j-1}}$ have been moved to be
in between $J_{\ell_j}$ and the center job $J_{2n+2}$ in the schedule $\pi^{j-1}$.

The $j$-th job swapping between the two jobs $J_{\ell_j}$ and $J_{n+1 + r_j}$ clearly decreases the total machine deterioration before the center job $J_{2n+2}$ by
$\delta_{\ell_j} - \delta_{r_j} = 2 \sum_{k=\ell_j+1}^{r_j} x_k$.

To estimate the total completion time,
we decompose the $j$-th job swapping between the two jobs $J_{\ell_j}$ and $J_{n+1 + r_j}$ as a sequence of $(r_j - \ell_j)$ ``{\em regular}'' job swappings,
between the two jobs $J_k$ and $J_{n+1 + k+1}$ for $k = r_j - 1, r_j - 2, \ldots, \ell_j + 1, \ell_j$.
We remark that the order of these regular job swappings is important, which guarantees that at the time of such a swapping,
the job $J_k$ is before the center job $J_{2n+2}$ and the job $J_{n+1 + k+1}$ is after the center job $J_{2n+2}$
(see the last paragraph before Section 3.1 for possible re-indexing the jobs).
For each such regular job swapping between the two jobs $J_k$ and $J_{n+1 + k+1}$, we can apply (almost, see below) Lemma~\ref{lemma03} to conclude that
it increases the total completion time by {\em at least} $x_{k+1}$.

From the proof of Lemma~\ref{lemma03},
the increment in the total completion time equals $x_{k+1}$ if and only if there are exactly $(n-k+1)$ jobs in between $J_{n+1 + k+1}$ and $J_n$ (inclusive),
that is, the $(j-j')$ jobs $J_{n+1 + r_{j'}}, J_{n+1 + r_{j'+1}}, \ldots, J_{n+1 + r_{j-1}}$ should not be moved
in between $J_k$ and the center job $J_{2n+2}$ in the schedule $\pi^{j-1}$.
Therefore, the $j$-th job swapping increases the total completion time by {\em at least} $\sum_{k=\ell_j+1}^{r_j} x_k$;
and the increment equals $\sum_{k=\ell_j+1}^{r_j} x_k$ if and only if $\ell_j > r_{j-1}$ ({\it i.e.}, $j' = j$).
This proves the lemma.
\end{proof}

\begin{lemma}
\label{lemma09}
For each $1 \le j \le m$, if the schedule $\pi^{j-1}$ satisfies the first two properties in Lemma~\ref{lemma02} and $\ell_j > r_j$, then
\begin{itemize}
\parskip=0pt
\item
	the schedule $\pi^j$ satisfies the first two properties in Lemma~\ref{lemma02};
\item
	the $j$-th job swapping increases the total machine deterioration before the center job $J_{2n+2}$ by
	$\delta_{r_j} - \delta_{\ell_j} = 2 \sum_{k=r_j+1}^{\ell_j} x_k$;
\item
	the $j$-th job swapping increases the total completion time by at least $\sum_{k=r_j+1}^{\ell_j} x_k$.
\end{itemize}
\end{lemma}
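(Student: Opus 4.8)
The plan is to mirror the proof of Lemma~\ref{lemma08}, the only genuinely new feature being that $\ell_j>r_j$, so the job removed from the prefix is \emph{longer} than the copy put in its place, which flips the direction of several of the effects that appeared there. For the first bullet I argue exactly as in Lemma~\ref{lemma08}: the $j$-th swap replaces $J_{\ell_j}$ in the prefix by the identical copy $J_{n+1+r_j}$ of $J_{r_j}$ and replaces $J_{n+1+r_j}$ in the suffix by the identical copy $J_{n+1+\ell_j}$ of $J_{\ell_j}$, and after re-sorting the prefix by processing time and the suffix by $p+\delta$ (using the re-indexing convention fixed just before Section~3.1, keeping $J_{2n+2}$ at the center with the first $\MA$ immediately before it) the SPT and SSF orders are restored, so $\pi^j$ again satisfies the first two properties of Lemma~\ref{lemma02}. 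The second bullet is a one-line computation: the total machine deterioration before $J_{2n+2}$ changes by $\delta_{r_j}-\delta_{\ell_j}=(M-2p_{r_j})-(M-2p_{\ell_j})=2(p_{\ell_j}-p_{r_j})=2\sum_{k=r_j+1}^{\ell_j}x_k$, which is positive since $\ell_j>r_j$, and the first $\MA$ lengthens by exactly the same amount.

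For the third bullet I would decompose the single swap $J_{\ell_j}\leftrightarrow J_{n+1+r_j}$ into $\ell_j-r_j$ elementary swaps, the one indexed by $k$ exchanging the prefix copy of $J_k$ with the suffix copy of $J_{k-1}$, performed for $k=\ell_j,\ell_j-1,\ldots,r_j+1$ in this order so that when each elementary swap is carried out $J_k$ is still before $J_{2n+2}$ and $J_{n+1+k-1}$ is still after it (this is where the re-indexing remarks preceding Section~3.1 are needed). Because an elementary swap now inserts the \emph{shorter} job into the prefix, it is not one of the regular swaps of Lemma~\ref{lemma03} and that lemma cannot be invoked verbatim; I would instead redo its bookkeeping directly, weighing the $x_k$ shortening of the prefix and the simultaneous $2x_k$ lengthening of the first $\MA$ (which delays $J_{2n+2}$ and all $n+1$ trailing jobs by $x_k$) against the savings produced on the prefix by the re-sort. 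This count shows each elementary swap raises $\sum_i C_i$ by at least $x_k$, with equality exactly when none of the copies $J_{n+1+r_1},\ldots,J_{n+1+r_{j-1}}$ relocated in earlier swaps currently lies between $J_k$ and $J_{2n+2}$. Summing over $k=r_j+1,\ldots,\ell_j$ then yields the claimed bound, and in particular the increase is exactly $\sum_{k=r_j+1}^{\ell_j}x_k$ when $\pi^{j-1}$ is the initial schedule $\pi^0$.

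The step I expect to be the main obstacle is the completion-time accounting for a single elementary swap, and specifically pinning down its \emph{sign}: since the copy entering the prefix is shorter than the one it evicts, the prefix part of the objective genuinely falls, and one might wrongly conclude that $\sum_i C_i$ falls too; the point is that the compensating $2x_k$ growth of the first $\MA$ delays every one of the $n+1$ jobs scheduled after $J_{2n+2}$ and more than offsets the prefix savings. I would therefore write out the term-by-term count explicitly, in the spirit of the bulleted enumeration in the proof of Lemma~\ref{lemma03}, rather than appealing to any symmetry with it. A secondary and routine point is to confirm that each elementary swap is admissible when its turn comes — that a copy of $J_{k-1}$ is indeed still on the suffix side — which is ensured by performing the elementary swaps in decreasing order of $k$ together with the re-indexing conventions.
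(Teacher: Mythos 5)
Your proposal follows essentially the same route as the paper: it decomposes the swap into the $\ell_j-r_j$ ``inverse'' regular swaps taken for $k=\ell_j,\ell_j-1,\ldots,r_j+1$, and then redoes the term-by-term completion-time bookkeeping from scratch (the paper likewise proves an explicit analog of Lemma~\ref{lemma03} rather than invoking it), correctly identifying that the $2x_k$ lengthening of the first $\MA$ outweighs the prefix savings. The only slip is your equality criterion, which you copied from the $\ell_j<r_j$ case: for these inverse swaps the relocated prefix copies $J_{n+1+r_{j'}}$ have $r_{j'}<r_j<k$ and hence never sit between $J_k$ and $J_{2n+2}$; what can inflate the increment is the jobs $J_{\ell_1},\ldots,J_{\ell_{j-1}}$ previously moved to the suffix side landing between $J_{2n+2}$ and $J_{n+1+k-1}$ --- but since Lemma~\ref{lemma09} asserts only the lower bound, this does not affect the validity of your proof of the stated claim.
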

\begin{proof}
We prove first an analog to Lemma~\ref{lemma03} on a {\em regular} job swapping between the two jobs $J_{i_\ell + 1}$ and $J_{n+1 + i_\ell}$,
which can be viewed as an inverse operation of the regular job swapping between the two jobs $J_{i_\ell}$ and $J_{n+1 + i_\ell + 1}$.

For ease of presentation, let $C_i$ denote the completion time of the job $J_i$ in the schedule after the regular job swapping, and
let $C'_i$ denote the completion time of the job $J_i$ in the schedule before the regular job swapping.
Comparing to the schedule before the swapping,
\begin{itemize}
\parskip=0pt
\item
	the completion time of jobs preceding $J_{n+1 + i_\ell}$ is unchanged;
\item
	$C_{n+1 + i_\ell} - C'_{i_\ell + 1} = p_{i_\ell} - p_{i_\ell + 1} = - x_{i_\ell + 1}$;
\item
	the completion time of each job in between $J_{i_\ell + 2}$ and $J_n$ (inclusive, $n - i_\ell - 1$ of them) decreases by $x_{i_\ell + 1}$;
\item
	the duration of the first $\MA$ increases by $2 x_{i_\ell + 1}$;
\item
	the completion time of each job in between $J_{2n+2}$ and $J_{n+1+i_\ell + 1}$ (inclusive, $n - i_\ell + 1$ of them) increases by $x_{i_\ell + 1}$;
\item
	$C_{i_\ell + 1} - C'_{n+1+i_\ell} = x_{i_\ell + 1} + (\delta_{i_\ell + 1} + p_{i_\ell + 1}) - (\delta_{i_\ell} + p_{i_\ell}) = 0$;
\item
	consequently, the completion time of jobs succeeding $J_{i_\ell + 1}$ is unchanged.
\end{itemize}
The total completion time of jobs in the schedule after this regular job swapping increases by at least $x_{i_\ell + 1}$.
Note that the increment equals $x_{i_\ell + 1}$ if and only if
there are exactly $(n - i_\ell + 1)$ jobs in between $J_{2n+2}$ and $J_{n+1+i_\ell + 1}$ (inclusive),
that is, the $(j-1)$ jobs $J_{\ell_1}, J_{\ell_2}, \ldots, J_{\ell_{j-1}}$ should not be moved
in between the center job $J_{2n+2}$ and $J_{n+1 + i_\ell}$ in the schedule $\pi^{j-1}$.

Using the above analog of Lemma~\ref{lemma03}, the rest of the proof of the lemma is similar to the proof of Lemma~\ref{lemma08}
by decomposing the $j$-th job swapping between the two jobs $J_{\ell_j}$ and $J_{n+1 + r_j}$ as a sequence of $(\ell_j - r_j)$ ``{\em regular}'' job swappings,
between the two jobs $J_{k+1}$ and $J_{n+1 + k}$ for $k = \ell_j - 1, \ell_j - 2, \ldots, r_j + 1, r_j$.
\end{proof}

\begin{theorem}
\label{thm10}
If there is a feasible schedule $\pi$ to the instance $I$ with the total completion time no more than $Q = Q_0 + B$,
then there is a subset $X_1 \subset X$ of sum exactly $B$.
\end{theorem}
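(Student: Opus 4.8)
The plan is to run the job-swapping machinery of Section~3.1 backwards. Assume $\pi$ is feasible, satisfies the properties of Lemma~\ref{lemma02}, and has total completion time at most $Q=Q_0+B$. By Lemma~\ref{lemma07} the job $J_{2n+2}$ sits at (or, in Case~2, one step past) the center position, so in every case there is a well-defined \emph{center form} $\bar\pi$ of $\pi$ in which $J_{2n+2}$ occupies position $n+2$ and all maintenance required before it is consolidated into a single (possibly zero-length) $\MA$ directly in front of $J_{2n+2}$, exactly as in $\pi^0$. Using the index sets $R=\{r_1<\cdots<r_m\}$ and $L=\{\ell_1<\cdots<\ell_m\}$ from the paragraph preceding Lemma~\ref{lemma08}, I would exhibit $\bar\pi$ as the result of applying to $\pi^0$ the $m$ swaps that exchange $J_{\ell_j}$ with $J_{n+1+r_j}$, and invoke Lemmas~\ref{lemma08} and~\ref{lemma09} inductively (the hypothesis ``$\pi^{j-1}$ satisfies the first two properties of Lemma~\ref{lemma02}'' propagates from $\pi^0$, modulo the re-indexing of jobs after each swap mentioned before Section~3.1).

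Two quantities are then tracked along the swaps. Write $U=\sum_{j:\ell_j<r_j}\sum_{k=\ell_j+1}^{r_j}x_k$ and $V=\sum_{j:\ell_j>r_j}\sum_{k=r_j+1}^{\ell_j}x_k$, and set $T=U-V=\sum_{j=1}^m(p_{r_j}-p_{\ell_j})$. First, by the second bullets of Lemmas~\ref{lemma08}--\ref{lemma09} the total machine deterioration preceding $J_{2n+2}$ in $\bar\pi$ equals $\sum_{i=0}^n\delta_i-2T$; since $\ML_0=\sum_{i=0}^n\delta_i-2B$, feasibility of $\pi$ (no $\MA$ among the $n+1$ jobs before the first one) forces $T\ge B$ in Cases~1 and~2, i.e.\ $U-V\ge B$. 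Second, each swap raises the total completion time by at least the corresponding sum of $x_k$'s, so the completion time of $\bar\pi$ is at least $Q_0+U+V$; reconciling $\bar\pi$ with $\pi$ then yields $U+V\le B$. Combining, $U-V\ge B\ge U+V$ forces $V=0$ and $U=B$, and moreover every swap must realize the \emph{equality} alternative of Lemma~\ref{lemma08}, namely $\ell_j>r_{j-1}$ for each $j$ (with $r_0$ vacuous).

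Since $\ell_j<r_j$ for all $j$ (as $V=0$) and $\ell_j>r_{j-1}$, we obtain the interleaving $\ell_1<r_1<\ell_2<r_2<\cdots<\ell_m<r_m$, so the index sets $\{\ell_j+1,\ldots,r_j\}$ are pairwise disjoint. Hence $X_1=\{\,x_k:\ell_j<k\le r_j\text{ for some }j\,\}$ is a genuine subset of $X$ with $\sum_{x\in X_1}x=\sum_{j=1}^m(p_{r_j}-p_{\ell_j})=T=B$, and $X_1$ is neither empty nor all of $X$ because $0<B<2B$; this is exactly the desired {\sc Partition} witness.

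The main obstacle is the ``reconciling $\bar\pi$ with $\pi$'' step together with the branching of Lemma~\ref{lemma07}. In Case~1 the center form coincides with $\pi$, so its completion time is exactly $Q\le Q_0+B$ and nothing extra is needed. In Case~3 the first $\MA$ in $\pi$ lies one slot earlier than in $\bar\pi$, which costs exactly its duration $S-\ML_0=2(B-T)$ in total completion time; substituting this into the estimate of the previous paragraph and using $S>\ML_0$ (which is what ``Case~3'' means) gives $U\ge B+3V$ alongside $U-V<B$, contradicting $V\ge0$, so Case~3 never occurs. Case~2 is the delicate one: moving $J_{2n+2}$ across the separation job changes the total completion time by $d+p_{\pi_{n+2}}-p_{2n+2}$, where $d$ is the duration of the first $\MA$, and to turn the resulting estimate into $U+V\le B$ one must additionally invoke the separation-job adjacency inequality of Lemma~\ref{lemma02} (Eq.~(\ref{eq3})), applied with $\pi_{n+3}=2n+2$ and $\delta_{2n+2}=0$, which constrains $p_{\pi_{n+2}}$ in terms of $4B-2T$. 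Carrying out this last reconciliation cleanly, while keeping the re-indexing after each swap consistent, is where the bulk of the work lies.
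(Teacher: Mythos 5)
This is essentially the paper's own argument: run the job swappings of Section~3.1 in reverse, read off $U-V\ge B$ from the deterioration budget (second bullets of Lemmas~\ref{lemma08} and~\ref{lemma09}) and $U+V\le B$ from the completion-time lower bound (third bullets), conclude $V=0$, $U=B$, and use the equality alternative of Lemma~\ref{lemma08} to get the interleaving $\ell_1<r_1<\cdots<\ell_m<r_m$ and hence disjoint index intervals. Your Case~3 treatment (deriving $U\ge B+3V$ against $U-V<B$ and concluding Case~3 cannot occur) is a legitimate variant of the paper's, which instead forces $\delta=0$ and $V=0$ and still extracts the partition; both are fine.

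The one place you stop short is the Case~2 reconciliation, which you flag as ``where the bulk of the work lies.'' In the paper it is a one-line step: displacing $J_{2n+2}$ past the separation job changes the total completion time by (essentially) $\delta_{\pi_{n+2}}+p_{\pi_{n+2}}-p_{2n+2}$, and this is nonnegative simply because every job $J_i$ with $i\ne 2n+2$ has $p_i+\delta_i=M-p_i\ge M-2B=p_{2n+2}$; that is all the estimate needs. Your instinct to invoke Eq.~(\ref{eq3}) here points the wrong way: applied with $\pi_{n+3}=2n+2$ and $\delta_{2n+2}=0\le\delta$, Eq.~(\ref{eq3}) gives the \emph{upper} bound $p_{\pi_{n+2}}+(\delta_{\pi_{n+2}}-\delta)\le p_{2n+2}$, whereas the argument requires a \emph{lower} bound on the displaced quantity so that the total completion time of $\pi$ is at least that of the center form plus $U+V$. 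So the missing step is small and does not need Eq.~(\ref{eq3}), but as written your proposal has not supplied it.
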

\begin{proof}
We start with a feasible schedule $\pi$, which has the first two properties stated in Lemma~\ref{lemma02} and
for which the total completion time is no more than $Q = Q_0 + B$.
Excluding the job $J_{2n+2}$, using the first $n+1$ jobs and the last $n+1$ job in $\pi$,
we determine the two subsets of indices $R = \{r_1, r_2, \ldots, r_m\}$ and $L = \{\ell_1, \ell_2, \ldots, \ell_m\}$,
and define the corresponding $m$ job swappings.
We then repeatedly apply the job swapping to convert the initial infeasible schedule $\pi^0$ into $\pi$.

In Case 1, the total machine deterioration of the first $(n+1)$ jobs in $\pi$ is
\[
\sum_{i=0}^n \delta_i - 2 \sum_{\ell_j < r_j} \sum_{k=\ell_j+1}^{r_j} x_k + 2 \sum_{\ell_j > r_j} \sum_{k=r_j+1}^{\ell_j} x_k 
	= \ML_0 - \delta,
\]
implying that
\begin{equation}
\label{eq6}
\sum_{\ell_j < r_j} \sum_{k=\ell_j+1}^{r_j} x_k - \sum_{\ell_j > r_j} \sum_{k=r_j+1}^{\ell_j} x_k = B + \frac 12 \delta,
\end{equation}
where $\delta \ge 0$ is the remaining machine maintenance level before the first $\MA$.

On the other hand, the total completion time of jobs in the schedule $\pi$ is at least
\[
Q_0 + \sum_{\ell_j < r_j} \sum_{k=\ell_j+1}^{r_j} x_k + \sum_{\ell_j > r_j} \sum_{k=r_j+1}^{\ell_j} x_k
	= Q_0 + B + \frac 12 \delta + 2 \sum_{\ell_j > r_j} \sum_{k=r_j+1}^{\ell_j} x_k.
\]
It follows that 1) $\delta = 0$; 2) there is no pair of swapping jobs $J_{\ell_j}$ and $J_{n+1 + r_j}$ such that $\ell_j > r_j$;
and 3) $\ell_1 < r_1 < \ell_2 < r_2 < \ldots < \ell_m < r_m$ (from the third item of Lemma~\ref{lemma08}).
Therefore, from Eq.~(\ref{eq6}), for the subset $X_1 = \cup_{j=1}^m \{x_{\ell_j+1}, x_{\ell_j+2}, \ldots, x_{r_j}\}$,
$\sum_{x \in X_1} x = B$.
That is, the instance $X$ of the {\sc Partition} problem is a yes-instance.

In Case 2, after all the $m$ job swappings, the first $\MA$ immediately precedes $J_{2n+2}$ and has its duration $- \delta$ since $\delta_{2n+2} = 0$,
where $\delta \ge 0$ is the remaining machine maintenance level before the first $\MA$.
$J_{2n+2}$ and its immediate succeeding $\MA$ and the following job need to be swapped to obtain the schedule $\pi$;
the thus moved $\MA$ is merged to the first $\MA$, resulting in a positive duration.
The total machine deterioration of the first $(n+1)$ jobs in $\pi$ (before the first $\MA$) is
\[
\sum_{i=0}^n \delta_i - 2 \sum_{\ell_j < r_j} \sum_{k=\ell_j+1}^{r_j} x_k + 2 \sum_{\ell_j > r_j} \sum_{k=r_j+1}^{\ell_j} x_k 
	= \ML_0 - \delta,
\]
implying that Eq.~(\ref{eq6}) still holds in this case.

On the other hand, the total completion time of jobs in the schedule $\pi$ is at least
\[
Q_0 + \sum_{\ell_j < r_j} \sum_{k=\ell_j+1}^{r_j} x_k + \sum_{\ell_j > r_j} \sum_{k=r_j+1}^{\ell_j} x_k + (\delta_{\pi_{n+2}} + p_{\pi_{n+2}} - p_{2n+2})
	\ge Q_0 + B + \frac 12 \delta + 2 \sum_{\ell_j > r_j} \sum_{k=r_j+1}^{\ell_j} x_k.
\]
The similarly as in Case 1, it follows that 1) $\delta = 0$; 2) there is no pair of swapping jobs $J_{\ell_j}$ and $J_{n+1 + r_j}$ such that $\ell_j > r_j$;
and 3) $\ell_1 < r_1 < \ell_2 < r_2 < \ldots < \ell_m < r_m$.
Therefore, from Eq.~(\ref{eq6}), for the subset $X_1 = \cup_{j=1}^m \{x_{\ell_j+1}, x_{\ell_j+2}, \ldots, x_{r_j}\}$,
$\sum_{x \in X_1} x = B$.
That is, the instance $X$ of the {\sc Partition} problem is a yes-instance.

In Case 3, after all the $m$ job swappings, the first $\MA$ immediately precedes $J_{2n+2}$ and has its duration $- \delta$ since $\delta_{2n+2} = 0$,
where $\delta \le 0$ is the remaining machine maintenance level before the first $\MA$.
Therefore, $J_{\pi_{n+1}}$ and the first $\MA$ need to be swapped to obtain the schedule $\pi$.
The total machine deterioration of the first $(n+1)$ jobs in $\pi$ (before the first $\MA$) is
\[
\sum_{i=0}^n \delta_i - 2 \sum_{\ell_j < r_j} \sum_{k=\ell_j+1}^{r_j} x_k + 2 \sum_{\ell_j > r_j} \sum_{k=r_j+1}^{\ell_j} x_k 
	= \ML_0 - \delta,
\]
implying that Eq.~(\ref{eq6}) still holds in this case, except that here $\delta \le 0$.

On the other hand, the total completion time of jobs in the schedule $\pi$ is at least
\[
Q_0 + \sum_{\ell_j < r_j} \sum_{k=\ell_j+1}^{r_j} x_k + \sum_{\ell_j > r_j} \sum_{k=r_j+1}^{\ell_j} x_k + (- \delta)
	\ge Q_0 + B - \frac 12 \delta + 2 \sum_{\ell_j > r_j} \sum_{k=r_j+1}^{\ell_j} x_k.
\]
The similarly as in Case 1, except that here $\delta \le 0$,
it follows that 1) $\delta = 0$; 2) there is no pair of swapping jobs $J_{\ell_j}$ and $J_{n+1 + r_j}$ such that $\ell_j > r_j$;
and 3) $\ell_1 < r_1 < \ell_2 < r_2 < \ldots < \ell_m < r_m$.
Therefore, from Eq.~(\ref{eq6}), for the subset $X_1 = \cup_{j=1}^m \{x_{\ell_j+1}, x_{\ell_j+2}, \ldots, x_{r_j}\}$,
$\sum_{x \in X_1} x = B$.
That is, the instance $X$ of the {\sc Partition} problem is a yes-instance.
\end{proof}

The following theorem follows immediately from Theorems~\ref{thm04} and \ref{thm10}.

\begin{theorem}
\label{thm11}
The general problem $(1 | p\MA | \sum_j C_j)$ is NP-hard.
\end{theorem}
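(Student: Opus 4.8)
The plan is to prove Theorem~\ref{thm11} by combining the two directions established above into a polynomial-time many-one reduction from {\sc Partition}. First I would fix the instance $I$ displayed above --- the $2n+3$ jobs built from two identical copies $J_i, J_{n+1+i}$ of the family $p_i = \sum_{j \le i} x_j$, $\delta_i = M - 2p_i$ for a huge $M$ satisfying Eq.~(\ref{eq4}), together with the single long job $J_{2n+2}$ of zero deterioration, the levels $\ML_0, \ML^{\max}$, and the threshold $Q = Q_0 + B$ with $Q_0$ as in Eq.~(\ref{eq5}). This construction is visibly computable in time polynomial in the size of the {\sc Partition} instance $X$, so all that remains is the equivalence ``$X$ is a yes-instance $\iff$ $I$ admits a feasible schedule of total completion time at most $Q$.''

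For the ``only if'' direction I would invoke Theorem~\ref{thm04}: starting from the infeasible skeleton $\pi^0$ of Figure~\ref{fig02}, a subset $X_1 \subset X$ of sum $B$ with indices $i_1 < \dots < i_m$ yields $m$ regular swaps of $J_{i_\ell - 1}$ with $J_{n+1+i_\ell}$; by Lemma~\ref{lemma03} each swap keeps the Lemma~\ref{lemma02} structure, cuts the pre-$J_{2n+2}$ deterioration by exactly $2x_{i_\ell}$ and raises the objective by exactly $x_{i_\ell}$, so after all $m$ swaps the deterioration load drops to $\ML_0$ (the first $\MA$ becomes vacuous and the schedule is feasible) and the cost is exactly $Q_0 + B$. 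For the ``if'' direction I would invoke Theorem~\ref{thm10}: given a feasible schedule $\pi$ of cost at most $Q_0 + B$ obeying Lemma~\ref{lemma02}, the size of $M$ forces, by the counting in Lemmas~\ref{lemma05}--\ref{lemma06}, that $J_{2n+2}$ lands essentially at the center in one of the three configurations of Lemma~\ref{lemma07}; in each case $\pi$ is (up to a final local move of the first $\MA$) obtained from $\pi^0$ by $m$ swaps of $J_{\ell_j}$ with $J_{n+1+r_j}$, and Lemmas~\ref{lemma08}--\ref{lemma09} estimate their effect --- a swap with $\ell_j < r_j$ raises the cost by at least $\sum_{k=\ell_j+1}^{r_j} x_k$, with equality exactly when the relevant intervals are disjoint and increasing, while a swap with $\ell_j > r_j$ both raises the cost and \emph{increases} the deterioration load. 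Pairing the deterioration balance Eq.~(\ref{eq6}) with the cost cap $Q_0 + B$ then squeezes out $\delta = 0$, forbids any ``reverse'' swap, and forces the intervals disjoint, so $X_1 = \bigcup_{j}\{x_{\ell_j+1}, \dots, x_{r_j}\}$ has sum exactly $B$.

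With both implications in hand, Theorem~\ref{thm11} is immediate: the map $X \mapsto I$ is a polynomial reduction and, by Theorems~\ref{thm04} and~\ref{thm10}, it preserves yes/no answers, so {\sc Partition} reduces to $(1 | p\MA | \sum_j C_j)$, which is therefore NP-hard. The main obstacle in this program is clearly the ``if'' direction encapsulated in Theorem~\ref{thm10}: one must show that a low-cost schedule cannot cheat by placing $J_{2n+2}$ off-center, and then track the interaction between deterioration savings and completion-time increments across an arbitrary interleaving of swaps --- the role of the large $M$, the asymmetry between forward ($\ell_j < r_j$) and reverse ($\ell_j > r_j$) swaps, and the tightness condition requiring disjoint increasing intervals are exactly what make the deterioration inequality and the cost inequality collide to pin down a genuine partition.
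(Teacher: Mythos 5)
Your proposal is correct and follows exactly the paper's route: Theorem~\ref{thm11} is obtained by observing that the construction of $I$ from $X$ is polynomial and invoking Theorems~\ref{thm04} and~\ref{thm10} for the two directions of the equivalence. Your accompanying summary of how those two theorems are themselves established (the swapping lemmas, the role of $M$, and the collision of the deterioration and cost inequalities) also matches the paper's argument.
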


\section{A $2$-approximation algorithm for $(1 | p\MA | \sum_j C_j)$}\label{sec3}
Recall that in the problem $(1 | p\MA | \sum_j C_j)$,
we are given a set of jobs ${\cal J} = \{J_i, i = 1, 2, \ldots, n\}$,
where each job $J_i = (p_i, \delta_i)$ is specified by its non-preemptive {\em processing time} $p_i$ and {\em machine deterioration} $\delta_i$. 
The machine deterioration $\delta_i$ quantifies the decrement in the machine maintenance level after processing the job $J_i$.
The machine has an initial machine maintenance level $\ML_0$, $0 \le \ML_0 \le \ML^{\max}$, where $\ML^{\max}$ is the maximum maintenance level.
The goal is to schedule the jobs and necessary $\MA$s of any duration such that all jobs can be processed without machine breakdown,
and that the total completion time of jobs is minimized.

In this section, we present a $2$-approximation algorithm, denoted as ${\cal A}_1$, for the problem.
Furthermore, the algorithm ${\cal A}_1$ produces a feasible schedule $\pi$ satisfying the first two properties stated in Lemma~\ref{lemma02},
suggesting that if the third property is violated then a local job swapping can further decrease the total completion time.

In the algorithm ${\cal A}_1$, the first step is to sort the jobs in SSF order (and thus we assume without loss of generality that)
$p_1 + \delta_1 \le p_2 + \delta_2 \le \ldots \le p_n + \delta_n$.
In the second step, the separation job is determined to be $J_k$, where $k$ is the maximum index such that $\sum_{i=1}^{k-1} \delta_i \le \ML_0$.
In the last step, the jobs preceding the separation job $J_k$ are re-sorted in the SPT order,
denoted by $(J_{i_1}, J_{i_2}, \ldots, J_{i_{k-1}})$, and the jobs succeeding the separation job are $(J_{k+1}, J_{k+2}, \ldots, J_n)$.
That is, the solution schedule is
\[
\pi = (J_{i_1}, J_{i_2}, \ldots, J_{i_{k-1}}; \MA_1, J_k; \MA_2, J_{k+1}, \MA_3, J_{k+2}, \ldots, \MA_{n-k+1}, J_n),
\]
where $\MA_1 = \sum_{j=1}^k \delta_j - \ML_0$ and $\MA_i = \delta_{k-1+i}$ for $i = 2, 3, \ldots, n-k+1$.

Let $\pi^*$ denote an optimal schedule satisfying all properties stated in Lemma~\ref{lemma02}, and its separation job is $J_{\pi^*_{k^*}}$:
\[
\pi^* = (J_{\pi^*_1}, J_{\pi^*_2}, \ldots, J_{\pi^*_{k^*-1}}; \MA^*_1, J_{\pi^*_{k^*}};
	\MA^*_2, J_{\pi^*_{k^*+1}}, \MA^*_3, J_{\pi^*_{k^*+2}}, \ldots, \MA^*_{n-k^*+1}, J_{\pi^*_n}).
\]
Let $C_i$ ($C^*_i$, respectively) denote the completion time of the job $J_{\pi_i}$ ($J_{\pi^*_i}$, respectively) in the schedule $\pi$ ($\pi^*$, respectively);
the makespans of $\pi$ and $\pi^*$ are $C_{\max}$ and $C^*_{\max}$, respectively, and (recall that $\ML_0 < \sum_{i=1}^n \delta_i$)
\begin{equation}
\label{eq7}
C_{\max} = C^*_{\max} = \sum_{i=1}^n (p_i + \delta_i) - \ML_0.
\end{equation}

\begin{lemma}
\label{lemma12}
For every $i \ge k$ we have
\[
\sum_{j=i}^n (p_j + \delta_j) \ge \sum_{j=i}^n (p_{\pi^*_j} + \delta_{\pi^*_j}).
\]
\end{lemma}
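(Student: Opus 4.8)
The plan is to reduce the inequality to a one-line extremal (rearrangement) observation about subsets of ${\cal J}$, exploiting the fact that algorithm ${\cal A}_1$ has already relabelled the jobs in SSF order, so that $p_1+\delta_1 \le p_2+\delta_2 \le \cdots \le p_n+\delta_n$. Under this labelling, for any index $i$ the left-hand side $\sum_{j=i}^{n}(p_j+\delta_j)$ is exactly the sum of the $n-i+1$ \emph{largest} values among $p_1+\delta_1,\ldots,p_n+\delta_n$; equivalently, it equals the maximum of $\sum_{J_\ell\in S}(p_\ell+\delta_\ell)$ over all subsets $S\subseteq{\cal J}$ with $|S|=n-i+1$.

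First I would state this extremal characterization of the left-hand side explicitly, noting that it holds for \emph{every} $i$ (the hypothesis $i\ge k$ is only needed for the way the lemma is later applied, not for its validity). Second, I would observe that $\{J_{\pi^*_i},J_{\pi^*_{i+1}},\ldots,J_{\pi^*_n}\}$ — the set of the last $n-i+1$ jobs processed in the optimal schedule $\pi^*$ — is one particular subset of ${\cal J}$ of cardinality $n-i+1$, because $\pi^*$ is a permutation of $(1,2,\ldots,n)$ and hence $\{\pi^*_i,\ldots,\pi^*_n\}$ consists of $n-i+1$ distinct indices. Consequently its total $(p+\delta)$-value $\sum_{j=i}^{n}(p_{\pi^*_j}+\delta_{\pi^*_j})$ cannot exceed the maximum identified in the first step, which is precisely the asserted inequality; possible ties among the values $p_\ell+\delta_\ell$ are harmless since we only ever compare sums of equinumerous collections.

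There is essentially no obstacle here: once the jobs are viewed through the Step-1 SSF relabelling, the statement is an instance of ``a sum of the $t$ largest numbers dominates the sum of any $t$ of them.'' The only care needed is the bookkeeping that the job \emph{indices} $i,i+1,\ldots,n$ appearing at positions $i,\ldots,n$ in ${\cal A}_1$'s schedule are genuinely the $n-i+1$ jobs with the largest $p+\delta$ (immediate from the sorting in Step~1). An alternative, more computational route would pass through the makespan identity~(\ref{eq7}) together with the explicit completion-time formula $C_j=\sum_{\ell=1}^{j}(p_\ell+\delta_\ell)-\ML_0$ valid in $\pi$ for $j\ge k$ and a matching lower bound on $C^*_{i-1}$ in $\pi^*$; but this introduces a case split at the separation job of $\pi^*$, so I would prefer the subset-extremality argument, which is shorter and case-free.
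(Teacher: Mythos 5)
Your argument is correct and is essentially identical to the paper's own proof: both observe that, after the SSF relabelling, $\sum_{j=i}^{n}(p_j+\delta_j)$ is the maximum possible sum of $p+\delta$ over any subset of $n-i+1$ jobs, and that the last $n-i+1$ jobs of $\pi^*$ form one such subset. No further comment is needed.
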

\begin{proof}
Since $p_1 + \delta_1 \le p_2 + \delta_2 \le \ldots \le p_n + \delta_n$,
$\sum_{j=i}^n (p_j + \delta_j)$ is the maximum sum of processing times and machine deterioration, over all possible subsets of $(n-i+1)$ jobs.
The lemma thus holds.
\end{proof}

\begin{theorem}
\label{thm13}
The algorithm ${\cal A}_1$ is an $O(n\log n)$-time $2$-approximation algorithm for the problem $(1 | p\MA | \sum_j C_j)$.
\end{theorem}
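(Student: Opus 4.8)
The plan is to bound the total completion time of the schedule $\pi$ produced by ${\cal A}_1$ against that of the optimal schedule $\pi^*$ by a simple position-by-position comparison, after first accounting for the time complexity. The $O(n\log n)$ bound is immediate: sorting in SSF order, locating the separation index $k$ by a scan, and re-sorting the $k-1$ prefix jobs in SPT order each cost $O(n\log n)$. For the approximation ratio, I would split $\sum_{i=1}^n C_i$ for $\pi$ into the contribution of the jobs scheduled \emph{before} the separation job (positions $1,\ldots,k-1$) and the contribution of the separation job together with the jobs \emph{after} it (positions $k,\ldots,n$), and treat the two blocks differently.

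For the prefix block, I would argue that ${\cal A}_1$ is in fact \emph{optimal} on those jobs in isolation: before the first $\MA$ no maintenance is needed, so the machine deterioration is irrelevant there, and scheduling in SPT order minimizes the total completion time of any fixed set of jobs with no intervening maintenance. The subtler point is that the \emph{set} of jobs ${\cal A}_1$ places before the separation job need not be the same set $\pi^*$ places before its separation job; here I would use Lemma~\ref{lemma12} together with the SSF choice of $k$ to show that whatever jobs end up in $\pi$'s prefix, their SPT-ordered total completion time is at most that of the corresponding prefix portion of $\pi^*$. Concretely, the completion time of the job in position $i<k$ of $\pi$ is a partial sum of the $i$ smallest processing times among all jobs, and one compares this against $C^*_i$. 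For the block from position $k$ onward, the key estimate is that each such completion time is at most $C_{\max}$, which by Eq.~(\ref{eq7}) equals $C^*_{\max}$, and simultaneously each $C^*_i$ for $i\ge k$ is at least $\sum_{j=i}^n(p_{\pi^*_j}+\delta_{\pi^*_j})$ subtracted from the makespan — i.e.\ a ``suffix-sum'' lower bound. Combining the trivial upper bound $C_i\le C_{\max}$ with a matching lower bound on $C^*_i$ coming from Lemma~\ref{lemma12} (the suffix sums of $\pi$ dominate those of $\pi^*$, hence the prefix sums, i.e.\ the completion times, of $\pi$ are controlled) is what delivers the factor $2$: summing, $\sum_{i\ge k} C_i \le \sum_{i\ge k} C_{\max} \le \sum_{i\ge k}(C^*_i + \text{something bounded by }\sum C^*_j)$, and the prefix block is absorbed into one more copy of the optimum.

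I expect the main obstacle to be the bookkeeping that reconciles the two schedules when their separation jobs — and hence the partition of $\{1,\ldots,n\}$ into a ``prefix set'' and a ``suffix set'' — differ. The clean way through is to avoid comparing job-to-job and instead compare the two nondecreasing sequences of completion times: show $C_i \le C^*_i + C^*_{\max}$ holds position by position (or, summed, $\sum_i C_i \le \sum_i C^*_i + \text{(one optimum)}$), using that $\sum_i C^*_i \ge C^*_{\max}$ trivially since the last job alone contributes $C^*_{\max}$. For positions before $k$ this is the SPT optimality of the prefix plus Lemma~\ref{lemma12}; for positions at or after $k$ it is $C_i \le C_{\max} = C^*_{\max} \le \sum_i C^*_i$ applied once per such position — which is too weak as stated, so the real work is to get a \emph{per-position} bound $C_i \le C^*_i + (\text{small})$ there as well, again via the suffix-sum domination of Lemma~\ref{lemma12} rewritten as $C_i = C_{\max} - \sum_{j>i}(p_j+\delta_j) \le C^*_{\max} - \sum_{j>i}(p_{\pi^*_j}+\delta_{\pi^*_j})$, i.e.\ the completion time of $\pi$'s $i$-th job is at most that of the job occupying position $i$ in $\pi^*$ whenever $i\ge k$. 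Granting that inequality, $\sum_i C_i \le \sum_{i<k} C_i + \sum_{i\ge k} C^*_i \le \sum_{i} C^*_i + \sum_{i<k} C_i \le 2\sum_i C^*_i$, the last step because the prefix block of $\pi$ is itself a feasible completion-time vector bounded by the optimum. This establishes the $2$-approximation.
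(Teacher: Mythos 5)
Your proposal is correct and follows essentially the same route as the paper: the per-position bound $C_i \le C_{\max} - \sum_{j>i}(p_j+\delta_j) \le C^*_{\max} - \sum_{j>i}(p_{\pi^*_j}+\delta_{\pi^*_j}) \le C^*_i$ for $i\ge k$ via Lemma~\ref{lemma12} and Eq.~(\ref{eq7}), plus SPT-optimality of the prefix charged to one extra copy of $\mbox{OPT}$, is exactly the paper's two-block decomposition. The only detail worth making explicit (as the paper does) is that for $k\le i<k^*$ the quantity $C^*_{\max}-\sum_{j>i}(p_{\pi^*_j}+\delta_{\pi^*_j})$ is only a lower bound on $C^*_i$, equal to $\sum_{j\le i}p_{\pi^*_j}$ minus the nonnegative remaining maintenance level, which still yields $C_i\le C^*_i$.
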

\begin{proof}
We compare the two schedules $\pi$ obtained by the algorithm ${\cal A}_1$ and $\pi^*$ an optimal schedule satisfying the properties stated in Lemma~\ref{lemma02}.
Using Eq.~(\ref{eq7}) and Lemma~\ref{lemma12}, it is clear that $C_i \le C^*_i$ for each $i = n, n-1, \ldots, \max\{k, k^*\}$.

Suppose $k < k^*$, then for each $i$ such that $k \le i < k^*$, we have
\begin{eqnarray*}
C_i &= 	&C_n - \sum_{j=i+1}^n (p_j + \delta_j)\\
	&\le&C^*_n - \sum_{j=i+1}^n (p_{\pi^*_j} + \delta_{\pi^*_j})\\
	&=  &\sum_{j=1}^i (p_{\pi^*_j} + \delta_{\pi^*_j}) - \ML_0\\
	&=  &\sum_{j=1}^i p_{\pi^*_j} - \left(\ML_0 - \sum_{j=1}^i \delta_{\pi^*_j}\right)\\
	&\le&\sum_{j=1}^i p_{\pi^*_j} = C^*_i. 
\end{eqnarray*}
Therefore, we have $C_i \le C^*_i$ for each $i = n, n-1, \ldots, k$.
It follows that
\begin{equation}
\label{eq8}
\sum_{i=k}^n C_i \le \sum_{i=k}^n C^*_i \le \mbox{OPT}.
\end{equation}

On the other hand, by the SPT order, the algorithm ${\cal A}_1$ achieves the minimum total completion time of jobs of $\{J_1, J_2, \ldots, J_{k-1}\}$.
One clearly sees that in the optimal schedule $\pi^*$, the sub-total completion time of $\{J_1, J_2, \ldots, J_{k-1}\}$ is upper-bounded by $\mbox{OPT}$.
Therefore,
\begin{equation}
\label{eq9}
\sum_{i=1}^{k-1} C_i \le \mbox{OPT}.
\end{equation}
Merging Eqs.~(\ref{eq8}) and (\ref{eq9}), we conclude that the total completion time of schedule $\pi$ is
\[
\sum_{i=1}^{k-1} C_i + \sum_{i=k}^n C_i \le 2 \cdot \mbox{OPT}.
\]
This proves the performance ratio of $2$ (which can also be shown tight on a trivial $2$-job instance
$I = \{J_1 = (1, \lambda), J_2 = (\lambda - 1, 1), \ML_0 = \ML^{\max} = \lambda\}$, with a sufficiently large $\lambda$).
The running time of the algorithm ${\cal A}_1$ is dominated by two times of sorting, each taking $O(n\log n)$ time. 
\end{proof}

\section{Concluding remarks}
We investigated the single machine scheduling with job-dependent machine deterioration, recently introduced by Bock {\it et al.}~\cite{BBH12},
with the objective to minimize the total completion time of jobs.
In the partial maintenance case, we proved the NP-hardness for the general problem, thus addressing the open problem left in the previous work.
From the approximation perspective, we designed a $2$-approximation, for which the ratio $2$ is tight on a trivial two-job instance.

The $2$-approximation algorithm is simple, but it is the first such work.
Our major contribution is the non-trivial NP-hardness proof, which might appear surprising at the first glance
since one has so much freedom in choosing the starting time and the duration of the maintenance activities.
It would be interesting to further study the (in-)approximability for the problem. 
It would also be interesting to study the problem in the full maintenance case, which was shown NP-hard, from the approximation algorithm perspective.
Approximating the problem in the full maintenance case seems more challenging, where we need to deal with multiple bin-packing sub-problems,
while the inter-relationship among them is much complex.

\section*{Acknowledgments}
W.L. was supported by the China Scholarship Council (Grant No. 201408330402).
Y.X. and G.L. were supported by NSERC.
W.T. was supported in part by funds from the Office of the Vice President for Research \& Economic Development and Georgia Southern University.


\end{document}